\def\hypergeom#1#2#3#4#5{{}_#1 F_{#2}\left({#3\atop#4}; \ #5\right)}
      \def\dR{{\mathbb R}}
   \def\dZ{{\mathbb Z}}
\def\bm\chi{\mbox{\boldmath$\chi$}}
\let\xker=\ker \def\ker{{\xker\,}}
\newtheorem{theorem}{Theorem}[section]
\newtheorem{proposition}[theorem]{Proposition}
\newtheorem{corollary}[theorem]{Corollary}
\theoremstyle{remark}
\newtheorem{remark}[theorem]{Remark}
\numberwithin{equation}{section}
\date{\today}
\author[M.~Derevyagin]{Maxim~Derevyagin}
\address{
MD,
Department of Mathematics\\
University of Connecticut\\
341 Mansfield Road, U-1009\\
Storrs, CT 06269-1009, USA}
\email{derevyagin.m@gmail.com}
\author[J.~Geronimo]{Jeffrey~S.~Geronimo}
\address{JG, School of Mathematics, Georgia Institute of Technology,
686 Cherry Street,
Atlanta, GA 30332--0160, USA}
\email{jeffrey.geronimo@math.gatech.edu}
\dedicatory{Dedicated to the memory of Richard Askey}
 \subjclass{Primary 33C45, 39A14; Secondary 65Q10, 42C05, 42C40.}
\keywords{Orthogonal polynomials, ultraspherical polynomials, discrete wave equation, generalized eigenvalue problem, bispectral problem}
\begin{document}

\title[Connection coefficients generalized bispectrality]{Connection coefficients for ultraspherical polynomials with argument doubling and generalized bispectrality}

\begin{abstract}
We start by presenting a generalization of a discrete wave equation that is satisfied by the entries of the matrix coefficients of the refinement equation corresponding to the multiresolution analysis of Alpert. The entries are in fact functions of two discrete variables and they can be expressed in terms of the Legendre polynomials. Next, we generalize these functions to the case of the ultraspherical polynomials and show that these new functions obey two generalized eigenvalue problems in each of the two discrete variables, which constitute a generalized bispectral problem. At the end, we make some connections to other problems.
\end{abstract}

\maketitle

\section{Introduction}

Let $\{P_n\}_{n=0}^{\infty}$ and $\{Q_n\}_{n=0}^{\infty}$  be two families of orthonormal polynomials whose orthogonality measures are $d\mu$ and $d\nu$, respectively. Then one can see that
\[
P_i(t)=\sum_{j=0}^{i}c_{i,j}Q_j(t),
\]
where the coefficients $c_{i,j}$ can be found in the following way
\[
c_{i,j}=\int P_i(t)Q_j(t)\,d\nu(t).
\]
These coefficients are called connection coefficients and their nonnegativity for some particular cases of the ultraspherical polynomials is useful in the proof of the positivity of a certain $_3F_2$ function, which in turn, based on the work of Gasper and Askey and Gasper, played a significant role in the first proof of the Bieberbach conjecture \cite{AAR}. Also there has been much work proving the nonnegativity of integrals of products of orthogonal polynomials times certain functions which was initiated by Askey in the late 1960's. These studies have been  stimulated by the fact that some of those integrals have combinatorial interpretations (see \cite{IKZ13}). 

Another instance that we would like to mention is that in some early work leading to the theory of bispectral problems a matrix $S_1$, whose entries are
\[
(S_1)_{i,j}=\int_a^{\Omega} P_i(t)P_j(t)\,d\mu(t)
\]
for some real $a$ and $\Omega$, was considered (for instance, see \cite{Grunbaum83}). The question was to find eigenvectors of $S_1$. Since  
$S_1$ is a full matrix, this is not an easy task. However, it was proposed to find a tridiagonal matrix commuting with $S_1$ in order to reduce the original problem to a problem of finding eigenvectors of the tridiagonal matrix, which is an easier and well understood problem. It was shown to be possible to construct such tridiagonal matrices for some families of orthogonal polynomials and this is one of the fundamental ideas in the theory of bispectral problems.  

The last instance to bring up here is that in \cite{GM15} the Alpert multiresolution analysis was studied in detail and important in this study was the integral
$$
f_{i,j}=\int_0^1 \hat p_i(t)\hat p_j(2t-1)dt,
$$
where $\hat p_i$ is the orthonormal Legendre polynomial, i.e. $\hat p_j(t)=k_j t^j +\text{lower degree terms}$ with $k_j>0$ and for any two nonnegative integers $k$ and $l$ we have 
\[
\int_{-1}^1 \hat p_k(t)\hat p_l(t)dt=\begin{cases}
                                     0, \,\,k\ne l;\\
                                     1,\,\, k=l.                                  
                                  \end{cases}
\]
These coefficients are entries in the refinement equation associated with this multiresolution analysis. The fact that the Legendre polynomials are involved in the above integral allowed the authors in \cite{GM15} to obtain many types of recurrence formulas in $i$ and $j$ including a generalized eigenvalue problem in each of the indices. These two equations together give rise to  a bispectral generalized eigenvalue problem.

We begin by discussing a common property of the coefficients in all the above-mentioned cases: they satisfy a generalized 2D discrete wave equation. Then we observe numerically that a damped oscillatory behavior takes place in the case of the ultraspherical generalization of the  coefficients $f_{i,j}$. In particular with
$$
f^{(\lambda)}_{i,j}=\int_0^1 \hat p^{(\lambda)}_i(t)\hat p^{(\lambda)}_j(2t-1)(t(1-t))^{\lambda-1/2}dt,
$$ 
where ${\hat p^{(\lambda)}_i}$ are the orthonormal ultraspherical polynomials and $\lambda>\ -1/2$ we find  the asymptotic formula 
\begin{equation*}
f^{(\lambda)}_{i,j}=k_j\frac{\cos\left(\pi\left(j + \frac{\lambda}{2} - \frac{i}{2} + \frac{1}{4}\right)\right)}{\sqrt{\pi}i^{\lambda + 1/2}}+O\left(\frac{1}{i^{\lambda+3/2}}\right),
\end{equation*}
where
\begin{equation*}
k_j=\frac{1}{2^{j+1-2\lambda}}\sqrt{\frac{(2\lambda)_j}{j!(\lambda)_j(\lambda+1)_j\lambda\Gamma(2\lambda)}}\Gamma(2j+2\lambda+1)(\lambda+\frac{1}{2})_j 
\end{equation*}
which confirms the damped oscillatory behavior. We also derive some related properties and show that $f^{(\lambda)}_{i,j}$ satisfy a bispectral generalized eigenvalue problem of the form
\[
\begin{split}
\tilde A_if^{(\lambda)}_{i,j}=(j+\lambda-\frac{1}{2})(j+\lambda+\frac{1}{2})B_if^{(\lambda)}_{i,j},\\
\hat A_jf^{(\lambda)}_{i,j}=(i+\lambda+\frac{1}{2})(i+\lambda-\frac{1}{2})\hat B_jf^{(\lambda)}_{i,j},
\end{split}
\] 
where $\tilde A_i$, $B_i$ are tridiagonal operators or second order linear difference operators acting on $i$ and $\hat A_j$, $\hat B_j$ are tridiagonal operators acting on $j$. Each of the two above-given relations is a generalized eigenvalue problem and the theory of such problems is intimately related to biorthogonal rational functions (for instance, see \cite{GM98}, \cite{IM95}, \cite{SZh00}). 

The paper is organized as follows. In Section 2 a vast generalization of the above integral is shown to give rise to a 2D wave equation and solutions to the special case of the above integral are plotted to show the oscillations. In Section 3 the Legendre case above is analyzed and various properties of the coefficients $f_{i,j}$ are  derived. One point of this section is to derive the orthogonality property of these coefficients using that they come from special functions. In Section 4 the Legendre polynomials are replaced by the ultraspherical polynomials and their scaled weight. 
Here it is shown that the coefficients $f^{(\lambda)}_{i,j}$ satisfy a wave equation and also a bispectral generalized eigenvalue problem. Two proofs are given developing the generalized eigenvalue problem. One is based  on the fact that the polynomials satisfy a  differential equation and has the flavor of the proof given in \cite{Grunbaum83} and the second follows from the formula for $f_{i,j}^{(\lambda)}$ in terms of a ${}_2 F_1$ hypergeometric function. The two proofs emphasize different aspects of the problem that maybe useful when viewing other orthogonal polynomial systems. In Section 5 connections are made to various other problems.

\section{The 2D discrete wave equation}

Let $\{P_n\}_{n=0}^{\infty}$ and $\{Q_n\}_{n=0}^{\infty}$ be two families of orthonormal polynomials with respect to two probability measures or, equivalently, two families that obey the three-term recurrence relations
\[
a_{n+1}P_{n+1}(t)+b_nP_n(t)+a_nP_{n-1}(t)=tP_n(t), \quad n=0,1,2,\dots
\] 
and
\[
c_{n+1}Q_{n+1}(t)+d_nQ_n(t)+c_nQ_{n-1}(t)=tQ_n(t), \quad n=0,1,2,\dots,
\] 
where the coefficients $a_n$ and $c_n$ are positive and the coefficients $b_n$ and $d_n$ are real. In particular, the first relations are
\[
a_{1}P_{1}(t)+b_0P_0(t)=tP_0(t), \quad c_{1}Q_{1}(t)+d_0Q_0(t)=tQ_0(t).
\]
Therefore we can set $a_0=c_0=0$ for the coefficients to be defined for $n=0,1,2,\dots$. Since the families are orthonormal with respect to probability measures we know that 
\[
P_0=1, \quad Q_0=1,
\] 
which are the initial conditions that allow to reconstruct  each of the systems from the corresponding recurrence relation. It should be stressed here that by imposing these particular initial conditions we implicitly assume that the corresponding orthogonality measures are probability measures.
 
In addition, suppose we are given a measure $\sigma$ on $\dR$ with finite moments.
Then, let us consider the coefficients 
\begin{equation}\label{Cijg}
u_{i,j}=\int_{\dR} P_i(t)Q_j(\alpha t+\beta) d\sigma(t),
\end{equation}
where $\alpha\ne 0$ and $\beta$ are complex numbers. It turns out that these coefficients constitute a solution of a generalized wave equation on the two dimensional lattice.

\begin{theorem}[cf. Theorem 2.1 from \cite{IKZ13}]\label{GdWaveTH} We have that
\begin{equation}\label{dWaveEqG}
a_{i+1}u_{i+1,j}+b_iu_{i,j}+a_iu_{i-1,j}
=\frac{c_{j+1}}{\alpha}u_{i,j+1}+\frac{d_{j}-\beta}{\alpha}u_{i,j}+\frac{c_j}{\alpha}u_{i,j-1}
\end{equation}
for $i,j=0$, $1$, $2$, \dots.
\end{theorem}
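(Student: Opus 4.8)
The plan is to evaluate the single quantity $\int_{\dR} t\, P_i(t)\, Q_j(\alpha t + \beta)\, d\sigma(t)$ in two different ways and compare the results. Since $\sigma$ has finite moments and $P_i, Q_j$ are polynomials, every integral that appears converges absolutely, so term-by-term integration against $d\sigma$ is legitimate throughout.

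First I would multiply the three-term recurrence $a_{i+1} P_{i+1}(t) + b_i P_i(t) + a_i P_{i-1}(t) = t P_i(t)$ by $Q_j(\alpha t + \beta)$ and integrate against $d\sigma$. By the definition \eqref{Cijg} of $u_{i,j}$ and linearity of the integral, this produces exactly the left-hand side of \eqref{dWaveEqG}:
\[
a_{i+1} u_{i+1,j} + b_i u_{i,j} + a_i u_{i-1,j} = \int_{\dR} t\, P_i(t)\, Q_j(\alpha t + \beta)\, d\sigma(t).
\]
Second, in the recurrence $c_{j+1} Q_{j+1}(s) + d_j Q_j(s) + c_j Q_{j-1}(s) = s\, Q_j(s)$ I would set $s = \alpha t + \beta$ and solve for $t\, Q_j(\alpha t + \beta)$, which is possible because $\alpha \neq 0$:
\[
t\, Q_j(\alpha t + \beta) = \frac{c_{j+1}}{\alpha} Q_{j+1}(\alpha t + \beta) + \frac{d_j - \beta}{\alpha} Q_j(\alpha t + \beta) + \frac{c_j}{\alpha} Q_{j-1}(\alpha t + \beta).
\]
Multiplying by $P_i(t)$ and integrating against $d\sigma$ then gives the right-hand side of \eqref{dWaveEqG} for the same integral, and equating the two expressions yields the claimed identity.

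The remaining point is the boundary of the lattice. For $i = 0$ or $j = 0$ the symbols $u_{-1,j}$ and $u_{i,-1}$ are formally undefined, but with the convention $a_0 = c_0 = 0$ adopted before the statement they enter \eqref{dWaveEqG} with zero coefficient, and the argument above is then run with the first recurrence relations $a_1 P_1 + b_0 P_0 = t P_0$ and $c_1 Q_1 + d_0 Q_0 = t Q_0$ instead. I do not expect any genuine obstacle in this proof; the only step requiring care is the bookkeeping of the factor $1/\alpha$ and the shift $\beta$ when inverting the affine change of argument in the second recurrence, which is precisely what produces the coefficients $c_{j+1}/\alpha$, $(d_j - \beta)/\alpha$, and $c_j/\alpha$ on the right-hand side of \eqref{dWaveEqG}.
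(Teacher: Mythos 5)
Your proposal is correct and is essentially the paper's own argument: the paper likewise applies the $P$-recurrence under the integral to reduce the left-hand side to $\int_{\dR} tP_i(t)Q_j(\alpha t+\beta)\,d\sigma(t)$, then writes $t=\tfrac{1}{\alpha}(\alpha t+\beta)-\tfrac{\beta}{\alpha}$ and invokes the $Q$-recurrence at the argument $\alpha t+\beta$, which is exactly your ``solve for $tQ_j(\alpha t+\beta)$'' step. Your explicit handling of the boundary cases via the convention $a_0=c_0=0$ matches the setup the paper establishes just before the theorem.
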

\begin{proof}
From \eqref{Cijg} and the three-term recurrence relations we get that 
\begin{align*}
&a_{i+1}u_{i+1,j}+b_iu_{i,j}+a_iu_{i-1,j}\\&=
\int_{\dR}(a_{i+1}P_{i+1}(t)+b_iP_i(t)+a_iP_{i-1}(t))Q_j(\alpha t+\beta) d\sigma(t)&
\\&=
\int_{\dR} tP_i(t)Q_j(\alpha t+\beta) d\sigma(t)\\&
=\frac{1}{\alpha}\int_{\dR}P_i(t)(\alpha t+\beta)Q_j(\alpha t+\beta) d\sigma(t)-\frac{\beta}{\alpha}\int_{\dR}P_i(t)Q_j(\alpha t+\beta) d\sigma(t)\\&=
\frac{1}{\alpha}
\int_{\dR}P_i(t)(c_{j+1}Q_{j+1}(\alpha t+\beta)+d_jQ_j(\alpha t+\beta)+c_jQ_{j-1}(\alpha t+\beta))d\sigma(t)\\&-\frac{\beta}{\alpha}u_{i,j}\\&
=\frac{c_{j+1}}{\alpha}u_{i,j+1}+\frac{d_{j}}{\alpha}u_{i,j}+\frac{c_j}{\alpha}u_{i,j-1}-\frac{\beta}{\alpha}u_{i,j}
\end{align*}
and thus \eqref{dWaveEqG} holds.
\end{proof}

\begin{remark} Given an equation of the form \eqref{dWaveEqG} then due to the Favard theorem the coefficients will uniquely determine the families $\{P_n\}_{n=0}^{\infty}$ and $\{Q_n\}_{n=0}^{\infty}$ of orthonormal polynomials. The measure $\sigma$ is responsible for the initial state when $j=0$ and $j$ can be thought of as a discrete time. Namely, for a solution of the form \eqref{Cijg} to exist they need to satisfy the initial condition
\[
u_{i,0}=\int_{\dR} P_i(t)d\sigma(t),
\]
which means that given initial function $u_{i,0}$ of the discrete space variable $i$, $\sigma$ needs to be found. The latter problem is a generalized moment problem and in this particular case it is equivalent to a Hamburger moment problem. 

It is also worth mentioning here that another type of cross-difference equations on $\dZ^2_+$ was recently discussed in \cite{ADvA} and the construction was based on multiple orthogonal polynomials. Type I Legendre-Angelesco multiple orthogonal polynomials also arise in the wavelet construction proposed by Alpert \cite{GIVA}.  
\end{remark}

Next, consider a particular case of the above scheme where $P_n$ and $Q_n$ are both  orthonormal Legendre polynomials $\hat p_{n}$ and so  verify the three-term recurrence relation
\[
\frac{(n+1)}{\sqrt{(2n+1)(2n+3)}}\hat p_{n+1}(t)+\frac{n}{\sqrt{(2n-1)(2n+1)}}\hat p_{n-1}(t)=t\hat p_n(t),
\] 
for $n=0, 1, 2, \dots$. Set $\sigma$ to be the Lebesgue measure on the interval $[0,1]$. As a result, the coefficients \eqref{Cijg} take the form   
\begin{equation}\label{Cij}
f_{i,j}=\int_{0}^1 \hat p_i(t)\hat p_j(2t-1) dt.
\end{equation}
It is not so hard to see that the polynomials $\hat p_j(2t-1)$ are orthogonal on the interval $[0,1]$ with respect to the Lebesgue measure, consequently
\begin{equation}\label{HalfZ}
f_{i,j}=0, \quad j>i=0, 1, 2, \dots.
\end{equation}

Since the coefficients of the three-term recurrence relation for the Legendre polynomials are explicitly known, the coefficients of equation \eqref{dWaveEqG} become explicit as well. The following Corollary can be found in \cite{GM15}.

\begin{corollary} The function $f_{i,j}$ satisfies,
\begin{equation}\label{dWaveEq}
\begin{split}
\frac{j+1}{\sqrt{(2j+1)(2j+3)}}f_{i,j+1}+f_{i,j}+\frac{j}{\sqrt{(2j-1)(2j+1)}}f_{i,j-1}=\\
=\frac{2(i+1)}{\sqrt{(2i+1)(2i+3)}}f_{i+1,j}+\frac{2i}{\sqrt{(2i-1)(2i+1)}}f_{i-1,j}
\end{split}
\end{equation}
for $i,j=0$, $1$, $2$, \dots.
\end{corollary}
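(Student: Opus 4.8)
The plan is to specialize Theorem~\ref{GdWaveTH}. I would take $P_n = Q_n = \hat p_n$, the orthonormal Legendre polynomials, let $\sigma$ be the Lebesgue measure on $[0,1]$, and identify the argument doubling $2t-1$ with the affine map $\alpha t + \beta$, so that $\alpha = 2$ and $\beta = -1$. With these choices the coefficients $u_{i,j}$ of \eqref{Cijg} are exactly the $f_{i,j}$ of \eqref{Cij}, so the general wave equation \eqref{dWaveEqG} applies directly to $f_{i,j}$.

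The second step is purely a matter of reading off the recurrence data. The displayed three-term relation for $\hat p_n$ has vanishing diagonal, so $b_n = d_n = 0$, while $a_n = c_n = n/\sqrt{(2n-1)(2n+1)}$ (consistent with $a_0 = c_0 = 0$). Substituting these values, together with $\alpha = 2$ and $\beta = -1$, into \eqref{dWaveEqG}, and then clearing the factor $1/\alpha = 1/2$ on the right-hand side by multiplying through by $2$, reproduces \eqref{dWaveEq} term by term. This is a routine computation with no hidden difficulty.

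The only point worth flagging is the standing hypothesis in Section~2 that the orthonormal families come from probability measures, which forces $P_0 = Q_0 = 1$, whereas the orthonormal Legendre polynomials have $\hat p_0 = 1/\sqrt{2}$ under the normalization $\int_{-1}^1 \hat p_k(t)\hat p_l(t)\,dt = \delta_{kl}$. This discrepancy is harmless: the proof of Theorem~\ref{GdWaveTH} uses only the three-term recurrence relations and never the values of $P_0$ or $Q_0$, so \eqref{dWaveEqG}, and hence \eqref{dWaveEq}, holds verbatim in this setting. Consequently there is essentially no obstacle; all the content lies in setting up the dictionary $\alpha = 2$, $\beta = -1$ and inserting the explicit Legendre recurrence coefficients.
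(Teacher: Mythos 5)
Your proposal is correct and is exactly the argument the paper intends: the corollary is stated as an immediate specialization of Theorem~\ref{GdWaveTH} with $P_n=Q_n=\hat p_n$, $\alpha=2$, $\beta=-1$, $\sigma$ Lebesgue measure on $[0,1]$, and the explicit Legendre recurrence coefficients $a_n=c_n=n/\sqrt{(2n-1)(2n+1)}$, $b_n=d_n=0$, followed by clearing the factor $1/2$. Your remark that the $\hat p_0=1/\sqrt{2}$ normalization is harmless because the theorem's proof uses only the three-term recurrences is a correct and worthwhile observation.
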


Below is the MATLAB generated graphical representation of some behavior of the solution $f_{i,j}$ to equation \eqref{dWaveEq}, which is a generalization of the discretized wave equation.

\begin{figure}[h!]
\includegraphics[width=\linewidth]{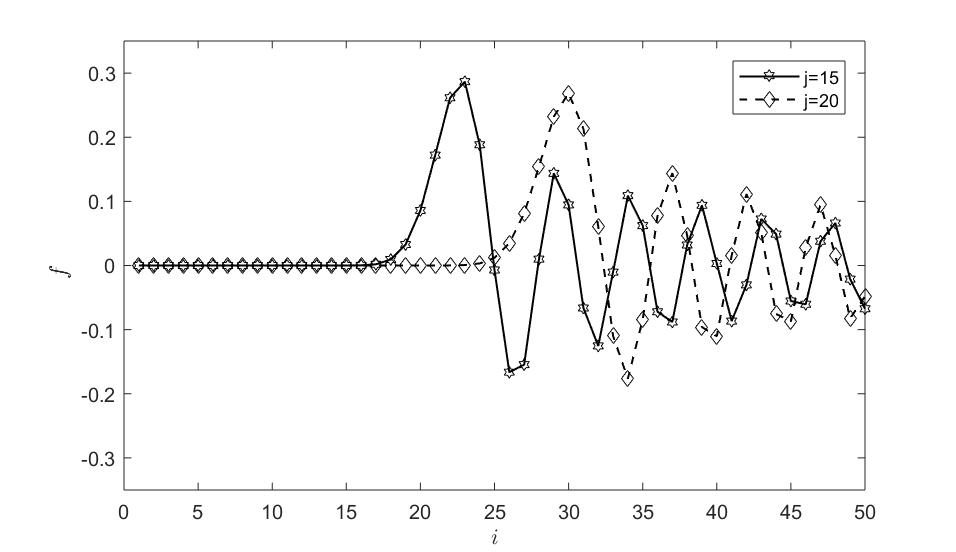}
\caption{This picture demonstrates the moving wave. Here, one can see two graphs of the function $f=f(i)=f_{i,j}$ of the discrete space variable $i$ at the two different discrete times $j=15$ and $j=20$.}
\label{Fig1}
\end{figure}

To sum up, we would like to point out here that the form \eqref{Cijg} of solutions of the discrete wave equations is very useful for understanding the behavior of solutions because there are many asymptotic results for a variety of families of orthogonal polynomials.

\section{Some further analysis of the coefficients $f_{i,j}$}

In this section, we will obtain some properties of the coefficients $f_{i,j}$ based on the intuition and observations developed in \cite{GM15}. In particular, we will rederive and expand upon some orthogonality properties of the coefficients $f_{i,j}$. 

We begin with the following statement, which is based on formula \eqref{Cij} and some known properties of the Legendre polynomials.

\begin{theorem}
Let $k$ and $l$ be two nonnegative integer numbers. Then one has
\begin{equation}\label{CijOrth}
\sum_{j=0}^{\infty}f_{k,j}f_{l,j}=\begin{cases}
                                     0, \,\,\text{if $k$ and $l$ are of the same parity but not equal};\\
                                     1,\,\,\text{if} \,\, k=l;\\
                                     (-1)^{\frac{k+l+1}{2}}\frac{k!l!\sqrt{2k+1}\sqrt{2l+1}}{2^{k+l-1}(k-l)(k+l+1)((\frac{k}{2})!)^2((\frac{l-1}{2})!)^2}, \,\,\text{if $k$ and $l$ are of opossite parity}.
                                  \end{cases}
\end{equation}
\end{theorem}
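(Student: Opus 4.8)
The plan is to exploit the fact, emphasized in the introduction, that the $f_{i,j}$ ``come from special functions.'' First I would record that, up to a fixed normalizing constant, the shifted Legendre polynomials $\{\hat p_j(2t-1)\}_{j\ge 0}$ form a complete orthogonal system in $L^2([0,1])$: the substitution $u=2t-1$ gives $\int_0^1 \hat p_i(2t-1)\hat p_j(2t-1)\,dt=\tfrac12\delta_{ij}$, and since these polynomials span all of $\dR[t]$ they are dense in $L^2([0,1])$. Consequently the numbers $f_{k,j}=\int_0^1 \hat p_k(t)\hat p_j(2t-1)\,dt$ are a fixed multiple of the expansion coefficients of the restriction $\hat p_k\!\mid_{[0,1]}$ in this orthogonal basis. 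Parseval's identity (the series converges absolutely by Cauchy--Schwarz, since each $\sum_j f_{k,j}^2$ is finite by Bessel's inequality) then collapses the double sum to a single integral,
\[
\sum_{j=0}^{\infty} f_{k,j}\,f_{l,j}=c\int_0^1 \hat p_k(t)\,\hat p_l(t)\,dt ,
\]
with $c$ an explicit constant dictated by the normalization $\int_0^1 \hat p_j(2t-1)^2\,dt=\tfrac12$. The entire theorem is thereby reduced to evaluating the single integral $\int_0^1 \hat p_k\hat p_l\,dt$ for all $k,l\ge 0$.

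When $k$ and $l$ have the same parity this is immediate: $\hat p_k(-t)\hat p_l(-t)=(-1)^{k+l}\hat p_k(t)\hat p_l(t)=\hat p_k(t)\hat p_l(t)$, so the integrand is even and $\int_0^1 \hat p_k\hat p_l\,dt=\tfrac12\int_{-1}^1 \hat p_k\hat p_l\,dt=\tfrac12\delta_{kl}$, which gives the first two branches of \eqref{CijOrth}.

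The only substantial case is $k,l$ of opposite parity (say $k$ even, $l$ odd, as the denominators in \eqref{CijOrth} presuppose; the other case is symmetric). Here I would evaluate $\int_0^1 \hat p_k\hat p_l\,dt$ by the classical Sturm--Liouville / Wronskian trick: from Legendre's equation $\big[(1-t^2)\hat p_m'\big]'+m(m+1)\hat p_m=0$ for $m=k$ and $m=l$, multiplying crosswise, subtracting, and integrating over $[0,1]$ turns the left-hand side into a boundary term, yielding
\[
\int_0^1 \hat p_k(t)\,\hat p_l(t)\,dt=\frac{\big[(1-t^2)\big(\hat p_k(t)\hat p_l'(t)-\hat p_l(t)\hat p_k'(t)\big)\big]_{t=0}^{t=1}}{(k-l)(k+l+1)} .
\]
The bracket vanishes at $t=1$ because of the factor $1-t^2$, and at $t=0$ only $\hat p_k(0)\hat p_l'(0)$ survives, since an odd polynomial and the derivative of an even one vanish at the origin. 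Inserting the standard values $P_{2a}(0)=(-1)^a\dfrac{(2a)!}{2^{2a}(a!)^2}$ and $P_{2b+1}'(0)=(-1)^b\dfrac{(2b+1)!}{2^{2b}(b!)^2}$, rescaled by the orthonormalizing factors $\sqrt{\tfrac{2m+1}{2}}$, and simplifying, produces the third branch of \eqref{CijOrth}.

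The conceptual content is thus light; the main obstacles are bookkeeping. First, one must pin down the constant $c$ precisely so that the $k=l$ case comes out to the value claimed in \eqref{CijOrth} — this amounts to nailing the completeness/Parseval normalization with no lost factor. Second, the opposite-parity reduction requires simplifying the factorials, the sign $(-1)^{(k+l+1)/2}$, and the powers of two down to exactly the stated form, while keeping the expression manifestly symmetric in $k\leftrightarrow l$ even though the derivation breaks that symmetry by assigning $\hat p(0)$ to the even index and $\hat p'(0)$ to the odd one. An alternative to this last step is to integrate the product of two Legendre generating functions $(1-2xt+t^2)^{-1/2}(1-2xs+s^2)^{-1/2}$ over $x\in[0,1]$ and extract the coefficient of $t^k s^l$, but the Wronskian computation is the shorter route.
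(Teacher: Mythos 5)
Your reduction is essentially the paper's: both arguments collapse $\sum_{j}f_{k,j}f_{l,j}$ to a constant times $\int_0^1\hat p_k(t)\hat p_l(t)\,dt$ and then split into parity cases, handling equal parity by evenness of the integrand on $[-1,1]$ and opposite parity by the classical formula for $\int_0^1 P_kP_l\,dt$. The differences are minor: the paper first truncates the sum to $j\le\min(k,l)$ using \eqref{HalfZ} and then invokes the reproducing property of the Christoffel--Darboux kernel $2\sum_{j=0}^{k}\hat p_j(2x-1)\hat p_j(2y-1)$, which is exactly the finite form of your Parseval identity; and for the opposite-parity integral the paper cites Byerly where you rederive the formula by the Sturm--Liouville/Wronskian computation. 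Both substitutions are legitimate (completeness of the shifted Legendre system in $L^2([0,1])$ is standard, and your boundary-term evaluation with $P_{2a}(0)$ and $P_{2b+1}'(0)$ reproduces the cited formula).

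The one genuine issue is the step you explicitly deferred, the constant $c$. Since $\int_0^1\hat p_j(2t-1)^2\,dt=\tfrac12$, the orthonormal system is $\{\sqrt2\,\hat p_j(2t-1)\}$ and $f_{k,j}=\tfrac1{\sqrt2}\langle\hat p_k,\sqrt2\,\hat p_j(2\cdot-1)\rangle_{L^2([0,1])}$, so Parseval gives $c=\tfrac12$, i.e.
\begin{equation*}
\sum_{j=0}^{\infty}f_{k,j}f_{l,j}=\tfrac12\int_0^1\hat p_k(t)\hat p_l(t)\,dt,
\end{equation*}
which equals $\tfrac14$ when $k=l$ --- not the value $1$ claimed in \eqref{CijOrth}. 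This is not a defect of your method: the statement, and the paper's own proof (which asserts the sum equals $2\int_0^1\hat p_k\hat p_l\,dy$ where the reproducing-kernel step actually yields $\tfrac12\int_0^1\hat p_k\hat p_l\,dy$), is off by a uniform factor of $4$ in all three branches. The case $k=l=0$ shows this directly ($f_{0,0}=\tfrac12$ and $f_{0,j}=0$ for $j>0$, so the sum is $\tfrac14$), and the ultraspherical analogue \eqref{CijOrthU} specialized to $\lambda=\tfrac12$ confirms the constant $\tfrac12$. So you do need to carry out the normalization explicitly rather than asserting it ``comes out to the value claimed''; done correctly, your argument proves the corrected statement, not the one as written.
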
 

\begin{proof} 
Without loss of generality, we can assume that $k\le l$. Next observe that due to \eqref{HalfZ} the left-hand side of formula \eqref{CijOrth} is truncated to 
\begin{equation*}
\sum_{j=0}^{\infty}f_{k,j}f_{l,j}=\sum_{j=0}^{k}f_{k,j}f_{l,j},
\end{equation*}
which can be written as
\begin{equation*}
\sum_{j=0}^{k}f_{k,j}f_{l,j}=\sum_{j=0}^{k}\int_{0}^1 \hat p_k(x)\hat p_j(2x-1) dx\int_{0}^1 \hat p_{l}(y)\hat p_j(2y-1) dy.
\end{equation*}
One can rewrite the expression in the following manner
\begin{equation*}
\sum_{j=0}^{k}f_{k,j}f_{l,j}=\int_{0}^1  \hat p_{l}(y) \left(\int_{0}^1 \hat p_k(x)\sum_{j=0}^{k} \hat p_j(2x-1)\hat p_j(2y-1) dx \right) dy.
\end{equation*}
Since the Christoffel-Darboux kernel  $\displaystyle{2\sum_{j=0}^{k} \hat p_j(2x-1)\hat p_j(2y-1)}$
is a reproducing kernel, we get
\begin{equation*}
\sum_{j=0}^{k}f_{k,j}f_{l,j}=2\int_{0}^1  \hat p_{k}(y) \hat p_l(y) dy.
\end{equation*}
Next recall that one can explicitly compute the quantity 
\[
\int_{0}^1  \hat p_{k}(y) \hat p_l(y) dy
\]
for any nonnegative integers $k$ and $l$. If $k$ and $l$ have the same parity the symmetry properties of the Legendre polynomials allow the above integral to be extended to the full orthonality interval $[-1,1]$ which gives the first two parts of the Theorem. The third case of formula \eqref{CijOrth} is a consequence of \cite[p.173, Art. 91, Example 2]{B59}. 
\end{proof}  

One can also compute the inner product of vectors $f_{i,j}$ taken the other way. 
 
\begin{theorem}\label{OrthOfF}
Let $k$ and $l$ be two nonnegative integer numbers. Then one has
\begin{equation}\label{CijOrth2}
\sum_{i=0}^{\infty}f_{i,k}f_{i,l}=\begin{cases}
                                     0, \,\,k\ne l;\\
                                     1/2,\,\,\text{if} \,\, k=l.                                  
                                  \end{cases}
\end{equation}
\end{theorem}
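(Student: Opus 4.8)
The plan is to identify the sum in \eqref{CijOrth2} as a Parseval identity for the Legendre orthonormal basis of $L^2[-1,1]$. For a nonnegative integer $k$, introduce the function $g_k\colon[-1,1]\to\dR$ given by $g_k(t)=\hat p_k(2t-1)$ for $t\in[0,1]$ and $g_k(t)=0$ for $t\in[-1,0)$. This is a bounded measurable function, hence $g_k\in L^2[-1,1]$, and since $g_k$ vanishes on $[-1,0)$ the defining integral \eqref{Cij} extends to the whole interval:
\[
f_{i,k}=\int_0^1 \hat p_i(t)\hat p_k(2t-1)\,dt=\int_{-1}^1 \hat p_i(t)g_k(t)\,dt .
\]
Thus $f_{i,k}$ is exactly the $i$-th Fourier--Legendre coefficient of $g_k$ on $[-1,1]$.

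The second step is to apply Parseval's identity. The orthonormal Legendre polynomials $\{\hat p_i\}_{i=0}^\infty$ form a complete orthonormal system in $L^2[-1,1]$ (polynomials being dense in $L^2$ of a bounded interval), so
\[
\sum_{i=0}^\infty f_{i,k}f_{i,l}=\sum_{i=0}^\infty\left(\int_{-1}^1 \hat p_i g_k\right)\left(\int_{-1}^1 \hat p_i g_l\right)=\int_{-1}^1 g_k(t)\,g_l(t)\,dt .
\]
The final step is a direct evaluation of the right-hand side: using the definition of $g_k$ and the substitution $s=2t-1$,
\[
\int_{-1}^1 g_k(t)g_l(t)\,dt=\int_0^1 \hat p_k(2t-1)\hat p_l(2t-1)\,dt=\frac12\int_{-1}^1 \hat p_k(s)\hat p_l(s)\,ds ,
\]
which is $0$ for $k\ne l$ and $1/2$ for $k=l$ by the orthonormality of the Legendre polynomials on $[-1,1]$; this yields \eqref{CijOrth2}.

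I do not expect a genuine obstacle here. The only point deserving a sentence of justification is the use of Parseval, namely that $g_k\in L^2[-1,1]$ and that the Legendre polynomials are complete in that space --- both standard facts. It is worth noting the contrast with the preceding theorem, where the relevant sum over $j$ is finite and one invokes the Christoffel--Darboux reproducing kernel; here the sum over $i$ is genuinely infinite, and it is precisely the completeness of the Legendre system that does the work.
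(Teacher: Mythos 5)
Your proof is correct and follows essentially the same route as the paper: both identify $f_{i,k}$ as the Fourier--Legendre coefficients of $\hat p_k(2t-1)\chi_{[0,1]}(t)$ in $L^2[-1,1]$, invoke completeness of the Legendre system (you via Parseval, the paper via $L^2$-convergence of the partial sums and continuity of the inner product), and finish with the substitution $s=2t-1$ giving $\tfrac12\int_{-1}^1\hat p_k\hat p_l$. No gaps.
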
 
\begin{proof}
Let $n$ be a nonnegative integer. Then we can write 
\begin{equation*}
\sum_{i=0}^{n}f_{i,k}f_{i,l}=\sum_{i=0}^{n}\int_{0}^1 \hat p_i(x)\hat p_k(2x-1) dx\int_{0}^1 \hat p_{i}(y)\hat p_l(2y-1) dy,
\end{equation*}
which can be rewritten as follows
\begin{equation*}
\sum_{i=0}^{n}f_{i,k}f_{i,l}=\int_{-1}^1  \hat p_{k}(2x-1)\chi_{[0,1]}(x) \left(\sum_{i=0}^{n}\left(\int_{-1}^1 \hat p_l(2y-1)\chi_{[0,1]}(y) \hat p_i(y) dy \right)\hat p_i(x) \right)dx.
\end{equation*}
Since the polynomials $\hat p_i$ form an orthonormal basis in $L_2([-1,1],dt)$ we know that 
\[
\sum_{i=0}^{n}\left(\int_{-1}^1 \hat p_l(2y-1)\chi_{[0,1]}(y) \hat p_i(y) dy \right)\hat p_i(x)\xrightarrow{L_2([-1,1],dt)} \hat p_{l}(2x-1)\chi_{[0,1]}(x)
\]
as $n\to\infty$. As a result we arrive at the following relation
\[
\begin{split}
\sum_{i=0}^{\infty}f_{i,k}f_{i,l}=&\int_{-1}^1 \hat p_{k}(2x-1)\chi_{[0,1]}(x)\hat p_{l}(2x-1)\chi_{[0,1]}(x)dx\\
=&\int_{0}^1 \hat p_{k}(2x-1)\hat p_{l}(2x-1)dx=\frac{1}{2}\int_{-1}^1 \hat p_{k}(t)\hat p_{l}(t)dt,
\end{split}
\]
which finally gives \eqref{CijOrth2}.
\end{proof} 

As a consequence we can say a bit more about the asymptotic behavior of the coefficients $f_{i,j}$.
\begin{corollary} Let $k$ be a fixed nonnegative integer number. Then
\[
f_{i,k}\longrightarrow 0
\]
as $i\to\infty$.
\end{corollary}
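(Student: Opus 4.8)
The plan is to read this off from Theorem~\ref{OrthOfF}. Taking $l=k$ in \eqref{CijOrth2} gives
\[
\sum_{i=0}^{\infty} f_{i,k}^{2} = \frac{1}{2},
\]
a convergent series with nonnegative terms. Since the general term of a convergent series must tend to zero, $f_{i,k}^{2}\to 0$, and therefore $f_{i,k}\to 0$ as $i\to\infty$. That is the whole argument; what remains is only to phrase it.

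Alternatively, and instructively, one can argue directly from the integral representation \eqref{Cij}. Writing $f_{i,k}=\int_{-1}^{1}\hat p_i(t)\,g(t)\,dt$ with the fixed function $g(t)=\hat p_k(2t-1)\chi_{[0,1]}(t)\in L_2([-1,1],dt)$, we see that $f_{i,k}$ is the $i$-th Fourier coefficient of $g$ with respect to the orthonormal Legendre basis. Bessel's inequality then forces $\sum_{i\ge 0} f_{i,k}^{2}\le \|g\|^{2}<\infty$, so once again the coefficients tend to zero. This is nothing but the Riemann--Lebesgue phenomenon for an orthonormal system, and it is in effect the same computation that produces Theorem~\ref{OrthOfF}.

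There is essentially no obstacle here: the only content is the finiteness of the sum of squares, which is precisely the statement of Theorem~\ref{OrthOfF} (equivalently Bessel's inequality), together with the elementary fact that the terms of a convergent series vanish in the limit. If one wanted a quantitative refinement, such as the rate of decay $f_{i,k}=O(i^{-\lambda-1/2})$ specialized to the Legendre case $\lambda=1/2$, one would invoke the asymptotic formula for $f^{(\lambda)}_{i,j}$ announced in the introduction rather than the orthogonality relation alone; but for the bare convergence statement asserted in the corollary no such input is required.
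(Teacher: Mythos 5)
Your argument is correct and is exactly the paper's proof: the convergence of $\sum_{i=0}^{\infty} f_{i,k}^2$, guaranteed by Theorem~\ref{OrthOfF}, forces the general term to vanish. The Bessel-inequality remark is a fine aside but adds nothing beyond what the cited theorem already provides.
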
 
\begin{proof}
The statement immediately follows from the fact that the series
\[
\sum_{i=0}^{\infty}f_{i,k}^2
\]
converges.
\end{proof}

\begin{remark} From \eqref{CijOrth2} one gets that
\[
\sum_{i=0}^{\infty}f_{i,j}^2=1/2
\]
for any nonnegative $j$. This means that the energy of the wave represented by $f=f(i)=f_{i,j}$ is conserved over the discrete time $j$.
\end{remark}

\begin{remark} The fact that $f_{i,k}$ can be represented as a hypergeometric function allows a more precise asymptotic estimate; see formula \eqref{asymi}. 
\end{remark}

\section{The case of ultraspherical polynomials}

In this section we will carry over our findings from the case of Legendre polynomials to the case of the family of ultraspherical polynomials which include the Legendre polynomials as a special case. 

Recall that for $\lambda>-1/2$ an ultraspherical polynomial $\hat p_n^{(\lambda)}(t)$ is a polynomial of degree $n$ that is the orthonormal polynomial with respect to the measure
\[
(1-t^2)^{\lambda-1/2}dt.
\]
In an analogous way to $f_{i,j}$, let us consider the function of the discrete variables $i$ and $j$
\begin{equation}\label{cijlam}
f^{(\lambda)}_{i,j}=\int_0^1 \hat{p}_i^{(\lambda)}(t)\hat{p}_j^{(\lambda)}(2t-1)(t(1-t))^{\lambda-1/2}dt
\end{equation}
and notice that 
\[
f_{i,j}=f_{i,j}^{(1/2)}.
\]
While this generalization allows us to consider a more general case,  the connection to multiresolution analysis seems to be lost due to the weight and there is no evident relation to multiresolution analysis for arbitrary $\lambda>-1/2$. Still, such a deformation of the coefficients $f_{i,j}$ gives an insight on how all these objects are connected to various problems some of which were mentioned in the introduction. 

Also, it is worth mentioning that the polynomials $\hat{p}_j^{(\lambda)}(2t-1)$ are orthogonal with respect to the measure
\[
(t(1-t))^{\lambda-1/2}dt.
\]
Next since the ultraspherical polynomials satisfy the three-term recurrence relation \cite{Szego}
\[
 \frac{1}{2}\sqrt{\frac{(n+1)(n+2\lambda)}{(n+\lambda)(n+\lambda+1)}}\hat{p}_{n+1}^{(\lambda)}(t)+
 \frac{1}{2}\sqrt{\frac{n(n+2\lambda-1)}{(n+\lambda-1)(n+\lambda)}}\hat{p}_{n-1}^{(\lambda)}(t)=t\hat{p}_{n}^{(\lambda)}(t)
\]
the following corollary of Theorem \ref{GdWaveTH} is immediate.

\begin{corollary} The function $f^{(\lambda)}_{i,j}$ satisfies
\begin{equation}\label{dWaveUP}
\begin{split}
\frac{1}{2}\sqrt{\frac{(j+1)(j+2\lambda)}{(j+\lambda)(j+\lambda+1)}}f_{i,j+1}^{(\lambda)}+f_{i,j}^{(\lambda)}+
\frac{1}{2}\sqrt{\frac{j(j+2\lambda-1)}{(j+\lambda-1)(j+\lambda)}}f_{i,j-1}^{(\lambda)}=\\
=\sqrt{\frac{(i+1)(i+2\lambda)}{(i+\lambda)(i+\lambda+1)}}f_{i+1,j}^{(\lambda)}+
\sqrt{\frac{i(i+2\lambda-1)}{(i+\lambda-1)(i+\lambda)}}f_{i-1,j}^{(\lambda)}
\end{split}
\end{equation}
for $i,j=0$, $1$, $2$, \dots.
\end{corollary}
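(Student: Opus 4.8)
The plan is to apply Theorem \ref{GdWaveTH} directly, with both families $\{P_n\}$ and $\{Q_n\}$ taken to be the orthonormal ultraspherical polynomials $\{\hat p_n^{(\lambda)}\}$, the measure $\sigma$ taken to be $(t(1-t))^{\lambda-1/2}\,dt$ on $[0,1]$ (which has finite moments for every $\lambda>-1/2$), and the affine parameters chosen as $\alpha=2$, $\beta=-1$. With these choices the coefficients \eqref{Cijg} become exactly $u_{i,j}=f^{(\lambda)}_{i,j}$ of \eqref{cijlam}, since $Q_j(\alpha t+\beta)=\hat p_j^{(\lambda)}(2t-1)$ and the integration is over $[0,1]$.

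The key step is then to read off the three-term recurrence coefficients. Because the ultraspherical weight is even we have $b_n=d_n=0$, and from the stated recurrence
\[
a_n=c_n=\frac{1}{2}\sqrt{\frac{n(n+2\lambda-1)}{(n+\lambda-1)(n+\lambda)}},
\]
with the usual convention $a_0=c_0=0$ (consistent with the factor $n$ in the numerator, and handling the boundary cases $i=0$, $j=0$ exactly as in Theorem \ref{GdWaveTH}). Substituting these values, $\alpha=2$, and $\beta=-1$ into \eqref{dWaveEqG} gives
\[
a_{i+1}f^{(\lambda)}_{i+1,j}+a_if^{(\lambda)}_{i-1,j}=\frac{a_{j+1}}{2}f^{(\lambda)}_{i,j+1}+\frac{1}{2}f^{(\lambda)}_{i,j}+\frac{a_j}{2}f^{(\lambda)}_{i,j-1},
\]
and multiplying through by $2$, so that the coefficient of $f^{(\lambda)}_{i,j}$ is normalized to $1$, together with the explicit formula for $a_n$, yields precisely \eqref{dWaveUP}.

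There is essentially no obstacle: the argument is bookkeeping once Theorem \ref{GdWaveTH} is in hand. The only points that deserve a moment's attention are identifying the correct affine substitution $t\mapsto 2t-1$ (so that $\alpha=2$, hence the factor $1/\alpha=1/2$ appears on the right-hand side), and the overall rescaling by $\alpha$ needed to match the normalized form displayed in \eqref{dWaveUP}; one should also note that the derivation uses only the three-term recurrence and linearity of the integral, so it is insensitive to how the ultraspherical polynomials are normalized.
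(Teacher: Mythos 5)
Your proof is correct and is exactly the paper's route: the paper also states the ultraspherical three-term recurrence (so $a_n=c_n=\tfrac12\sqrt{n(n+2\lambda-1)/((n+\lambda-1)(n+\lambda))}$, $b_n=d_n=0$) and obtains \eqref{dWaveUP} as an immediate specialization of Theorem \ref{GdWaveTH} with $\alpha=2$, $\beta=-1$, $d\sigma=(t(1-t))^{\lambda-1/2}dt$ on $[0,1]$, up to the same overall factor of $2$. Your remark that only the recurrence and linearity are used (so the probability-normalization hypothesis is immaterial) is a sensible extra check.
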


As one can see from the above statement, the function $f^{(\lambda)}_{i,j}$ is a solution of a discrete wave equation and Figure \ref{Fig2} demonstrates how the function changes with $\lambda$ when $j$ is fixed.

\begin{figure}[h!]
\includegraphics[width=\linewidth]{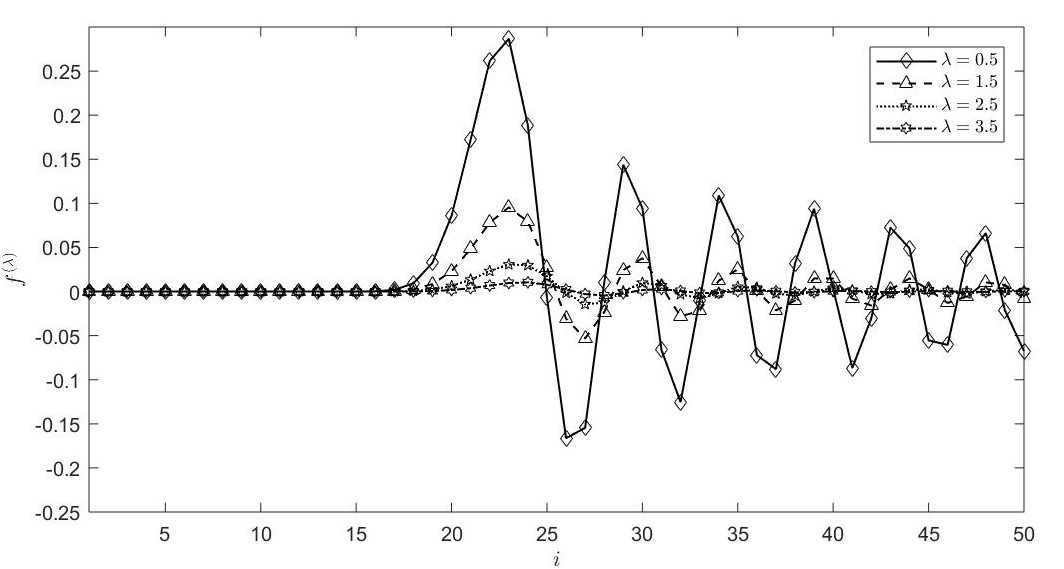}
\caption{This picture shows the $\lambda$-evolution of the function $f^{(\lambda)}=f^{(\lambda)}(i)=f_{i,j}^{(\lambda)}$ of the discrete space variable $i$ when the discrete time $j$ is fixed and $j=15$.}
\label{Fig2}
\end{figure}

It is not so hard to see that it is possible to generalize \eqref{CijOrth} and \eqref{CijOrth2}  to the case of the ultraspherical polynomials. 

\begin{theorem}
Let $k$ and $l$ be two nonnegative integer numbers. Then one has
\begin{equation}\label{CijOrthU}
\sum_{j=0}^{\infty}f_{k,j}^{(\lambda)}f_{l,j}^{(\lambda)}=\frac{1}{2^{2\lambda}}\int_{0}^1  \hat p_{k}^{(\lambda)}(y) \hat p_l^{(\lambda)}(y) (y(1-y))^{\lambda-\frac{1}{2}}dy
\end{equation}
for any $\lambda>-1/2$ and
\begin{equation}\label{CijOrth2U}
\sum_{i=0}^{\infty}f_{i,k}^{(\lambda)}f_{i,l}^{(\lambda)}=\int_{0}^1 \hat p_{k}^{(\lambda)}(2x-1)\hat p_{l}^{(\lambda)}(2x-1)x^{2\lambda-1}\left(\frac{1-x}{1+x}\right)^{\lambda-\frac{1}{2}}dx
\end{equation}
provided that $\lambda>0$.
\end{theorem}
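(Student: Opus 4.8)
\emph{Proof proposal.} The plan is to follow the two Legendre arguments (the theorem containing \eqref{CijOrth} and Theorem~\ref{OrthOfF}), carefully tracking the weight that is lost when one transfers an integral from $[-1,1]$ to $[0,1]$.

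For \eqref{CijOrthU} I would first note the analogue of the half-support vanishing \eqref{HalfZ}: since $\hat p_j^{(\lambda)}(2t-1)$ is orthogonal to every polynomial of degree $<j$ with respect to $(t(1-t))^{\lambda-1/2}dt$ on $[0,1]$ and $\hat p_i^{(\lambda)}$ has degree $i$, one gets $f^{(\lambda)}_{i,j}=0$ for $j>i$; hence, taking without loss of generality $k\le l$, the series $\sum_{j\ge 0}f^{(\lambda)}_{k,j}f^{(\lambda)}_{l,j}$ truncates at $j=k$. Then the substitution $t=2x-1$ gives $\int_0^1 \hat p_j^{(\lambda)}(2x-1)\hat p_m^{(\lambda)}(2x-1)(x(1-x))^{\lambda-1/2}dx=2^{-2\lambda}\delta_{jm}$, so that $2^{2\lambda}\sum_{j=0}^{k}\hat p_j^{(\lambda)}(2x-1)\hat p_j^{(\lambda)}(2y-1)$ is the Christoffel--Darboux reproducing kernel for polynomials of degree $\le k$ relative to $(x(1-x))^{\lambda-1/2}dx$. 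Writing $\sum_{j=0}^{k}f^{(\lambda)}_{k,j}f^{(\lambda)}_{l,j}$ as a double integral in $x$ and $y$ and performing the $x$-integral first against $\hat p_k^{(\lambda)}(x)$ (degree $k$, so the reproducing property applies), the inner integral collapses to $2^{-2\lambda}\hat p_k^{(\lambda)}(y)$, and the remaining $y$-integral against $\hat p_l^{(\lambda)}(y)(y(1-y))^{\lambda-1/2}$ is exactly the right-hand side of \eqref{CijOrthU}.

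For \eqref{CijOrth2U} I would recast $f^{(\lambda)}_{i,k}$ as a Fourier coefficient of a fixed function in $L_2([-1,1],(1-x^2)^{\lambda-1/2}dx)$. Using $(x(1-x))^{\lambda-1/2}=(1-x^2)^{\lambda-1/2}\bigl(x/(1+x)\bigr)^{\lambda-1/2}$ on $(0,1)$, one has $f^{(\lambda)}_{i,k}=\int_{-1}^{1}\hat p_i^{(\lambda)}(x)\,g_k(x)\,(1-x^2)^{\lambda-1/2}dx$ with $g_k(x)=\chi_{[0,1]}(x)\,\hat p_k^{(\lambda)}(2x-1)\bigl(x/(1+x)\bigr)^{\lambda-1/2}$. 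A short endpoint estimate shows $|g_k(x)|^2(1-x^2)^{\lambda-1/2}$ behaves like $x^{2\lambda-1}$ near $0$ and like $(1-x)^{\lambda-1/2}$ near $1$, so $g_k\in L_2([-1,1],(1-x^2)^{\lambda-1/2}dx)$ precisely when $\lambda>0$ --- this is where the hypothesis enters. Granting this, the $L_2$-convergence of the partial Fourier sums of $g_l$ (exactly as in the proof of Theorem~\ref{OrthOfF}), equivalently Parseval's identity for the orthonormal basis $\{\hat p_i^{(\lambda)}\}$, yields $\sum_{i\ge 0}f^{(\lambda)}_{i,k}f^{(\lambda)}_{i,l}=\int_{-1}^{1}g_k(x)g_l(x)(1-x^2)^{\lambda-1/2}dx$; finally simplifying $\bigl(x/(1+x)\bigr)^{2\lambda-1}(1-x^2)^{\lambda-1/2}=x^{2\lambda-1}\bigl((1-x)/(1+x)\bigr)^{\lambda-1/2}$ gives \eqref{CijOrth2U}.

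The only genuinely delicate point is the endpoint integrability in the second part: one must check that the weight mismatch $\bigl(x/(1+x)\bigr)^{\lambda-1/2}$ does not destroy square integrability, and this is exactly what forces $\lambda>0$. Everything else is the same reproducing-kernel and completeness bookkeeping used in the Legendre case, now carrying the extra normalization factor $2^{-2\lambda}$.
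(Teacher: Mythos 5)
Your proposal is correct and follows essentially the same route as the paper: the reproducing-kernel collapse with the $2^{-2\lambda}$ normalization for \eqref{CijOrthU}, and Parseval/completeness for the function $g_k(x)=\chi_{[0,1]}(x)\hat p_k^{(\lambda)}(2x-1)(x/(1+x))^{\lambda-1/2}$ in $L_2([-1,1],(1-x^2)^{\lambda-1/2}dx)$ for \eqref{CijOrth2U}. Your explicit endpoint check showing that square-integrability near $x=0$ is exactly what forces $\lambda>0$ is a welcome bit of detail that the paper only asserts implicitly.
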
 
\begin{proof} 
As before we can assume that $k\le l$ therefore, 
\small
\begin{multline*}
\sum_{j=0}^{\infty}f_{k,j}^{(\lambda)}f_{l,j}^{(\lambda)}=\sum_{j=0}^{k}f_{k,j}^{(\lambda)}f_{l,j}^{(\lambda)}=\\
\int_{0}^1  \hat p_{l}^{(\lambda)}(y) \left(\int_{0}^1 \hat p_k^{(\lambda)}(x)\sum_{j=0}^{k} \hat p_j^{(\lambda)}(2x-1)\hat p_j^{(\lambda)}(2y-1) (x(1-x))^{\lambda-\frac{1}{2}}dx \right) (y(1-y))^{\lambda-\frac{1}{2}}dy.
\end{multline*}
\normalsize
Since the Christoffel-Darboux kernel 
\[
2^{2\lambda}\sum_{j=0}^{k} \hat p_j^{(\lambda)}(2x-1)\hat p_j^{(\lambda)}(2y-1)
\]
is a reproducing kernel in the corresponding $L_2$-space, we get
\begin{equation*}
\sum_{j=0}^{\infty}f_{k,j}^{(\lambda)}f_{l,j}^{(\lambda)}=\frac{1}{2^{2\lambda}}\int_{0}^1  \hat p_{k}^{(\lambda)}(y) \hat p_l^{(\lambda)}(y) (y(1-y))^{\lambda-\frac{1}{2}}dy.
\end{equation*}
To prove the second equality, consider the following representation of the finite sum 
\[
\sum_{i=0}^{n}f_{i,k}^{(\lambda)}f_{i,l}^{(\lambda)}=
\int_{-1}^{1}\hat p_{k}^{(\lambda)}(2x-1)\chi_{[0,1]}(x)\frac{x^{\lambda-1/2}}{(1+x)^{\lambda-1/2}}P_n(x)(1-x^2)^{\lambda-1/2}dx,
\]
where 
\[
P_n(x)=\sum_{i=0}^{n}\int_{-1}^{1}\left(\hat p_{l}^{(\lambda)}(2y-1)\chi_{[0,1]}(y))\frac{y^{\lambda-1/2}}{(1+y)^{\lambda-1/2}}p_{i}^{(\lambda)}(y)(1-y^2)^{\lambda-1/2}dy\right) \hat p_{i}^{(\lambda)}(x).
\]
If $\lambda>0$ then 
\[
P_n(x)\xrightarrow{L_2([-1,1],(1-x^2)^{\lambda-1/2}dx)}\hat p_{l}^{(\lambda)}(2x-1)\chi_{[0,1]}(x)\frac{x^{\lambda-1/2}}{(1+x)^{\lambda-1/2}}
\]
as $n\to\infty$. Next since the functional 
\[
F(g)=\int_{-1}^{1}\hat p_{k}^{(\lambda)}(2x-1)\chi_{[0,1]}(x)\frac{x^{\lambda-1/2}}{(1+x)^{\lambda-1/2}}g(x)(1-x^2)^{\lambda-1/2}dx,
\]
is continuous for $\lambda>0$ we arrive at the following
\[
\sum_{i=0}^{\infty}f_{i,k}^{(\lambda)}f_{i,l}^{(\lambda)}=\int_{0}^1 \hat p_{k}^{(\lambda)}(2x-1))\hat p_{l}^{(\lambda)}(2x-1)x^{2\lambda-1}\left(\frac{1-x}{1+x}\right)^{\lambda-\frac{1}{2}}dx
\]
which completes the proof.
\end{proof} 

\begin{remark} The first integral in the above Theorem can be evaluated with the use of the equations~(4.7.30) in \cite{Szego}. With 
$$
I^1_{k,l}=\frac{1}{2^{\lambda}}k_k k_l I^2_{k,l},
$$
where
$$
k_l=2^l\sqrt{\frac{(\lambda)_l(\lambda+1)_l}{l!(2\lambda)_l}},
$$
and
$$
I^2_{k,l}=\int_0^1 p_k^{\lambda}(y) p_i^{\lambda}(y)(y(1-y))^{\lambda-1/2}dy.
$$
With the use of the formulas alluded to above in \cite{Szego}  we find
\begin{align*}
I^2_{2k,2l}&=(-1)^{k+l}\frac{(1/2)_k(1/2)_l\Gamma(\lambda+\frac{1}{2})^2}{(k+\lambda)_k(l+\lambda)_l\Gamma(2\lambda+1)}\\&\sum_{j=0}^k{\frac{(-k)_j(k+\lambda)_j(\lambda+1/2)_{2j}}{(1)_j(1/2)_j(2\lambda+1)_{2j}}\hypergeom43{-l,l+\lambda,j+\frac{\lambda}{2}+\frac{1}{4},j+\frac{\lambda}{2}+\frac{3}{4}}{\frac{1}{2},j+\lambda+1,j+\lambda+\frac{1}{2}}{1}},
\end{align*}
\begin{align*}
I^2_{2k,2l+1}&=(-1)^{k+l}\frac{(1/2)_k(3/2)_l\Gamma(\lambda+\frac{1}{2})\Gamma(\lambda+\frac{3}{2})}{(k+\lambda)_k(l+\lambda+1)_l\Gamma(2\lambda+2)}\\&\sum_{j=0}^k{\frac{(-k)_j(k+\lambda)_j(\lambda+3/2)_{2j}}{(1)_j(1/2)_j(2\lambda+2)_{2j}}\hypergeom43{-l,l+\lambda+1,j+\frac{\lambda}{2}+\frac{3}{4},j+\frac{\lambda}{2}+\frac{5}{4}}{\frac{3}{2},j+\lambda+1,j+\lambda+\frac{3}{2}}{1}},
\end{align*}
and
\small
\begin{align*}
I^2_{2k+1,2l+1}&=(-1)^{k+l}\frac{(3/2)_k(3/2)_l\Gamma(\lambda+\frac{1}{2})\Gamma(\lambda+\frac{5}{2})}{(k+\lambda+1)_k(l+\lambda+1)_l\Gamma(2\lambda+3)}\\&\sum_{j=0}^k{\frac{(-k)_j(k+\lambda+1)_j(\lambda+5/2)_{2j}}{(1)_j(3/2)_j(2\lambda+3)_{2j}}\hypergeom43{-l,l+\lambda+1,j+\frac{\lambda}{2}+\frac{5}{4},j+\frac{\lambda}{2}+\frac{7}{4}}{\frac{3}{2},j+\lambda+2,j+\lambda+\frac{3}{2}}{1}}.
\end{align*}
\normalsize

Note that all of the above hypergeometric functions are balanced. Furthermore for $\lambda=1/2$ one of the terms in the numerator  cancels a denoninator term so they all become balanced ${}_3F_2$'s and can be summed using the Pfaff-Saalschiitz formula. The remaining sums in turn reduce to the Legendre case discussed earlier. 

At this point we are unable to determine whether for certain values of $\lambda$ the above sums simplify or there is any orthogonality as in the Legendre case. Another interesting problem is the asymptotics of the above sums.
\end{remark} 

A formula for the second integral in the above Theorem maybe obtained using equation~(4.7.6) (first formula) in \cite{Szego} and is
\begin{align*}
&\int_{0}^1 \hat p_{k}^{(\lambda)}(2x-1)\hat p_{l}^{(\lambda)}(2x-1)x^{2\lambda-1}\left(\frac{1-x}{1+x}\right)^{\lambda-\frac{1}{2}}dx\\&=(-1)^{k+l}k^{\lambda}_kk^{\lambda}_l\frac{(\lambda+\frac{1}{2})_k(\lambda+\frac{1}{2})_l}{(k+2\lambda)_k(l+2\lambda)_l}\Gamma(\lambda+1/2)\\&\times\sum_{j=0}^i\sum_{n=0}^l\frac{(-k)_j(k+2\lambda)_j(-i)_n(i+2\lambda)_n\Gamma(j+n+2\lambda)}{(1)_j(\lambda+1/2)_j(1)_n(\lambda+1/2)_n\Gamma(j+n+3\lambda+1/2)}\\&\times\hypergeom21{\lambda-1/2,j+n+2\lambda}{j+n+3\lambda+1/2}{-1}.
\end{align*}

The next step is to obtain a generalized eigenvalue problem which will be a 1D-relation for the function $f_{i,j}^{(\lambda)}$ unlike \eqref{dWaveUP}. Our first approach uses the fact that the ultraspherical polynomials satisfy second order differential equations and apparently the approach can be generalized to the case of polynomials satisfying differential equations such as Krall polynomials, Koornwinder's generalized Jacobi polynomials and some Sobolev orthogonal polynomials.

\begin{theorem}\label{UGEPth}
Let $j$ be a fixed nonnegative integer number. Then the function $f=f(i)=f^{(\lambda)}_{i,j}$ of the discrete variable $i$ satisfies the generalized eigenvalue problem 
\begin{multline}\label{recurge}
2((i+\lambda)^2-1/4)(i+\lambda+\frac{3}{2})\sqrt{\frac{i+2\lambda}{(i+1)(i+\lambda+1)(\lambda+i)}}f_{i+1,j}^{(\lambda)}+\\
+2((i+\lambda)^2-1/4)(i+\lambda-\frac{3}{2})\sqrt{\frac{i}{(i-1+\lambda)(i-1+2\lambda)(\lambda+i)}}f_{i-1,j}^{(\lambda)}=\\
(j+\lambda-\frac{1}{2})(j+\lambda+\frac{1}{2})\Big[2(i+\lambda-1/2)\sqrt{\frac{i+2\lambda}{(i+1)(i+\lambda+1)(\lambda+i)}}f_{i+1,j}^{(\lambda)}+4f_{i,j}^{(\lambda)}+\\
2(i+\lambda+1/2)\sqrt{\frac{i}{(i-1+\lambda)(i-1+2\lambda)(\lambda+i)}}f_{i-1,j}^{(\lambda)}\Big],
\end{multline}
for $i=0$, $1$, $2$, \dots and, here, the number $(j+\lambda-\frac{1}{2})(j+\lambda+\frac{1}{2})$ is the corresponding generalized eigenvalue. 
\end{theorem}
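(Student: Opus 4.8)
The plan is to imitate the argument of Gr\"unbaum \cite{Grunbaum83}: use the second-order differential equation satisfied by the ultraspherical polynomial carrying the doubled argument to generate the $j$-dependent eigenvalue, move the operator onto the other factor by integration by parts, and then collapse the outcome into a three-term difference operator in $i$ by means of the Gegenbauer equation and the standard structure relations for $\hat p^{(\lambda)}_i$.

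First I would record that $\hat p^{(\lambda)}_j(2t-1)$ is, up to a normalization constant, the orthonormal polynomial on $[0,1]$ for $(t(1-t))^{\lambda-1/2}$, hence $\ell_t\big[\hat p^{(\lambda)}_j(2t-1)\big]=-j(j+2\lambda)\hat p^{(\lambda)}_j(2t-1)$ with $\ell_t y=t(1-t)y''+(\lambda+\tfrac12)(1-2t)y'$, the operator being self-adjoint for the weight $(t(1-t))^{\lambda-1/2}$ since $\ell_t y\cdot(t(1-t))^{\lambda-1/2}=\frac{d}{dt}\big[(t(1-t))^{\lambda+1/2}y'\big]$. Plugging this into \eqref{cijlam} and integrating by parts twice — the boundary terms disappearing because $(t(1-t))^{\lambda+1/2}$ vanishes at $0$ and $1$ for $\lambda>-\tfrac12$ — turns the left side into $-j(j+2\lambda)f^{(\lambda)}_{i,j}$ and the right side into $\int_0^1\hat p^{(\lambda)}_j(2t-1)\,\ell_t\big[\hat p^{(\lambda)}_i(t)\big]\,(t(1-t))^{\lambda-1/2}\,dt$.

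The crux — and where the argument doubling makes itself felt — is that $\hat p^{(\lambda)}_i(t)$ is a Gegenbauer polynomial for $[-1,1]$, not the natural orthogonal polynomial for $(t(1-t))^{\lambda-1/2}$ on $[0,1]$, so it is \emph{not} an eigenfunction of $\ell_t$. I would evaluate $\ell_t[\hat p^{(\lambda)}_i(t)]$ by using the Gegenbauer equation $(1-t^2)(\hat p^{(\lambda)}_i)''-(2\lambda+1)t(\hat p^{(\lambda)}_i)'+i(i+2\lambda)\hat p^{(\lambda)}_i=0$ to remove the second derivative, and then bring in the differentiation/structure relation $(1-t^2)(\hat p^{(\lambda)}_n)'=(n+2\lambda)a^{(\lambda)}_n\hat p^{(\lambda)}_{n-1}-n\,a^{(\lambda)}_{n+1}\hat p^{(\lambda)}_{n+1}$ together with the three-term recurrence. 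Pairing the result back against $\hat p^{(\lambda)}_j(2t-1)(t(1-t))^{\lambda-1/2}$ converts each $\hat p^{(\lambda)}_m(t)$ into $f^{(\lambda)}_{m,j}$, while the $\hat p^{(\lambda)}_m(2t-1)$'s that are produced when a structure relation is applied to the $j$-factor become $f^{(\lambda)}_{i,j\pm1}$ — shifts in the other variable. I would eliminate those unwanted $j$-shifts with the help of the $2$D discrete wave equation \eqref{dWaveUP} together with the ladder relation itself: these provide two independent linear combinations of $f^{(\lambda)}_{i,j-1}$, $f^{(\lambda)}_{i,j+1}$, $f^{(\lambda)}_{i,j}$, $f^{(\lambda)}_{i\pm1,j}$, enough to solve for the two $j$-shifts. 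After clearing denominators, what survives is tridiagonal in $i$; writing its $j$-dependence as $j(j+2\lambda)+(\lambda-\tfrac12)(\lambda+\tfrac12)=(j+\lambda-\tfrac12)(j+\lambda+\tfrac12)$ — the constant $(\lambda-\tfrac12)(\lambda+\tfrac12)$ being exactly what is needed to make the $i$- and $j$-dependences symmetric and the operators $\tilde A_i$, $B_i$ take the form in the statement — produces \eqref{recurge}.

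The hard part will be the interval mismatch: since $\ell_t[\hat p^{(\lambda)}_i(t)]$ is not an eigenfunction, it a priori spills over all of $\hat p^{(\lambda)}_0(t),\dots,\hat p^{(\lambda)}_i(t)$, so the real content is to verify that once the structure relations and the wave equation are fed in everything collapses onto the three consecutive indices $i-1,i,i+1$, and to read off the precise rational-in-$(i,\lambda)$ coefficients in \eqref{recurge}. A minor additional point: several of the intermediate integrations by parts have vanishing boundary terms only for $\lambda>\tfrac12$, so the result for $-\tfrac12<\lambda\le\tfrac12$ should be recovered by analytic continuation in $\lambda$, both sides of \eqref{recurge} being analytic there.
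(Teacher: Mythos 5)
Your opening is the same as the paper's first proof: the self--adjoint form of the $[0,1]$ operator, the eigenvalue relation $\ell_t[\hat p^{(\lambda)}_j(2t-1)]=-j(j+2\lambda)\hat p^{(\lambda)}_j(2t-1)$, and two integrations by parts whose boundary terms vanish for every $\lambda>-\tfrac12$ thanks to the factor $(t(1-t))^{\lambda+1/2}$. The gap is in the step you yourself flag as ``the hard part.'' After transposing the operator you face
$\ell_t[\hat p^{(\lambda)}_i(t)]=-i(i+2\lambda)\hat p^{(\lambda)}_i(t)+(t-1)(\hat p^{(\lambda)}_i)''(t)+(\lambda+\tfrac12)(\hat p^{(\lambda)}_i)'(t)$,
and the correction term is a polynomial of degree $i-1$ whose Gegenbauer expansion generically involves \emph{all} of $\hat p^{(\lambda)}_0,\dots,\hat p^{(\lambda)}_{i-1}$; pairing it against $\hat p^{(\lambda)}_j(2t-1)(t(1-t))^{\lambda-1/2}$ therefore produces $f^{(\lambda)}_{m,j}$ for every $m\le i-1$, not a three--term expression. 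No amount of feeding in structure relations makes this collapse onto $\{i-1,i,i+1\}$: if it did, you would obtain an \emph{ordinary} tridiagonal eigenvalue problem $j(j+2\lambda)f_{i,j}^{(\lambda)}=(\text{tridiag in }i)f^{(\lambda)}_{\cdot,j}$, whereas the statement is a genuinely \emph{generalized} eigenvalue problem with a nontrivial $B_i$ on the right --- the presence of $B_i$ is precisely the footprint of the obstruction. The device that makes the paper's proof close is that the Sturm--Liouville operator is applied not to $\hat p^{(\lambda)}_i$ but to the preassembled combination $B_i\hat p^{(\lambda)}_i=4\hat p^{(\lambda)}_i+b_{i+1}\hat p^{(\lambda)}_{i+1}+b_{i-1}\hat p^{(\lambda)}_{i-1}$ of \eqref{bi}, whose coefficients are chosen (via the three--term recurrence and Szeg\H{o} (4.7.28)) so that $\tfrac{d}{dt}\big(B_i\hat p^{(\lambda)}_i\big)=4(\lambda-\tfrac12)\hat p^{(\lambda)}_i+4(1+t)\tfrac{d}{dt}\hat p^{(\lambda)}_i$; with this the correction operator acts on $B_i\hat p^{(\lambda)}_i$ as $4$ times the Gegenbauer operator minus $4(\lambda^2-\tfrac14)$ applied to $\hat p^{(\lambda)}_i$ itself, i.e.\ as a pure multiple of $\hat p^{(\lambda)}_i$, and everything closes. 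Your proposal never identifies $B_i$ or any equivalent device, so the computation does not terminate.

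Two secondary points. First, your diagnosis that the differential--equation step creates ``$j$--shifts'' to be removed by the wave equation is off: the $j$--factor is an exact eigenfunction of $\ell_t$, so no $j$--shifts arise at that stage; they would appear only if you integrate by parts further against the bare weight $(t(1-t))^{\lambda-1/2}$, and then you would have two unknowns $f^{(\lambda)}_{i,j\pm1}$ but only the single linear relation \eqref{dWaveUP} to eliminate them --- the ``ladder relation itself'' is an identity between polynomials, not between the coefficients $f^{(\lambda)}_{i,j}$, and does not supply a second relation without further argument. Second, your restriction to $\lambda>\tfrac12$ followed by analytic continuation is an artifact of those extra integrations by parts; the paper's argument needs only the two with the factor $(t(1-t))^{\lambda+1/2}$ and is valid for all $\lambda>-\tfrac12$ directly. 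To repair your write--up, either construct $B_i$ and run the computation on $B_i\hat p^{(\lambda)}_i$, or switch to the paper's second route via the ${}_2F_1$ representation \eqref{Ione} and the contiguous relation at $z=\tfrac12$.
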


\begin{remark} For the case $\lambda=1/2$, formula \eqref{recurge} was obtained in \cite{GM15}. 
\end{remark}

\begin{proof}
To make all the formulas shorter and, more importantly transparent, let us introduce the following operators
\begin{align}\label{tai}
 A_i&=2(i+\lambda+\frac{3}{2})\sqrt{\frac{i+2\lambda}{(i+1)(i+\lambda+1)(\lambda+i)}}E_+\nonumber\\&+2(i+\lambda-\frac{3}{2})\sqrt{\frac{i}{(i-1+\lambda)(i-1+2\lambda)(\lambda+i)}}E_-\nonumber\\&=a_{i+1}E_++a_{i-1}E_-
\end{align}
and
\begin{align}\label{bi}
B_i&=4I+2(i+\lambda-1/2)\sqrt{\frac{i+2\lambda}{(i+1)(i+\lambda+1)(\lambda+i)}}E_+\nonumber\\&+2(i+\lambda+1/2)\sqrt{\frac{i}{(i-1+\lambda)(i-1+2\lambda)(\lambda+i)}} E_-\nonumber\\&=4I+b_{i+1}E_++b_{i-1}E_-,
\end{align}
where $I$ is the identity operator and $E_+$, $E_-$ are the forward and backward shift operators on $i$, respectively. With these notations,  equation~\eqref{recurge} can be rewritten as
\begin{equation}\label{recurge1}
(i(i+2\lambda)+\lambda^2-1/4)A_if_{i,j}^{(\lambda)}=(j(j+2\lambda)+\lambda^2-1/4)B_if_{i,j}^{(\lambda)}
\end{equation}
or
\begin{equation}\label{recurge2}
i(i+2\lambda)A_if_{i,j}^{(\lambda)}+(\lambda^2-1/4)(A_i-B_i)f_{i,j}^{(\lambda)}=j(j+2\lambda)B_if_{i,j}^{(\lambda)}.
\end{equation}
Notice that
\begin{align}\label{aimbi}
A_i-B_i&=-4I+4\sqrt{\frac{i+2\lambda}{(i+1)(i+\lambda+1)(\lambda+i)}}E_+\nonumber\\&-4\sqrt{\frac{i}{(i-1+\lambda)(i-1+2\lambda)(\lambda+i)}}E_-
\end{align}
As is known \cite{Szego}, the ultraspherical polynomials satisfy the differential equation
\begin{equation}\label{diffe}
\frac{d}{dt}((t(1-t))^{\lambda+1/2}\frac{d}{dt}\hat p_j^{(\lambda)}(2t-1))+j(j+2\lambda)(t(1-t))^{\lambda-1/2}\hat p_j^{(\lambda)}(2t-1)=0.
\end{equation}
Thus after two integration by parts we have
\begin{equation*}
    \begin{split}
j(j+2\lambda)B_i f_{i,j}^{(\lambda)}=-\int_0^1\frac{d}{dt}((t(1-t))^{\lambda+1/2}\frac{d}{dt}B_i \hat p^{(\lambda)}_i(t)) \hat p^{(\lambda)}_j(2t-1)dt\nonumber\\=-\int_0^1((t(1-t)\frac{d^2}{dt^2}+(\lambda+1/2)(1-2t)\frac{d}{dt})B_i \hat p^{(\lambda)}_i(t))\hat p^{(\lambda)}_j(2t-1)(t(1-t))^{\lambda-1/2}dt\nonumber\\=-\int_0^1((1-t^2)\frac{d^2}{dt^2}-(2\lambda+1)t\frac{d}{dt})B_i \hat p^{(\lambda)}_i(t))\hat p^{\lambda}_j(2t-1)(t(1-t))^{\lambda-1/2}dt\nonumber\\-\int_0^1((t-1)\frac{d^2}{dt^2}+(\lambda+1/2)\frac{d}{dt})B_i \hat p^{(\lambda)}_i(t))\hat p^{\lambda}_j(2t-1)(t(1-t))^{\lambda-1/2}dt.
\end{split}
    \end{equation*}
Now
\[
\begin{split}
-((1-t^2)\frac{d^2}{dt^2}-(2\lambda+1)t\frac{d}{dt})B_i \hat p^{(\lambda)}_i(t))=(i+1)(i+1+2\lambda)b_{i+1} \hat p^{(\lambda)}_{i+1}(t)\\+i(i+2\lambda)b_{i} \hat p^{(\lambda)}_i(t)+(i-1)(i-1+2\lambda)b_{i-1} \hat p^{(\lambda)}_{i-1}(t).
\end{split}
\]
Since
$$
(i\pm 1)(i\pm1+2\lambda)2(i+\lambda\mp \frac{1}{2})-i(i+2\lambda)2(i+\lambda\pm \frac{3}{2})
\mp 4(\lambda^2-1/4)=0,
$$
it follows that
\begin{align}\label{recurge4}
&(j(j+2\lambda)B_i-i(i+2\lambda)A_i-(\lambda^2-1/4)(A_i-B_i))f^{(\lambda)}_{i,j}\nonumber\\&=-\int_0^1((t-1)\frac{d^2}{dt^2}+(\lambda+1/2)\frac{d}{dt})B_i \hat p^{(\lambda)}_i(t))\hat p^{(\lambda)}_j(2t-1)(t(1-t))^{\lambda-1/2}dt\nonumber\\&+4(i(i+2\lambda)+\lambda^2-1/4)\int_0^1\hat p^{(\lambda)}_i(t)\hat p^{(\lambda)}_j(2t-1)(t(1-t))^{\lambda-1/2}dt.
\end{align}
We note that
$$
b_{i+1}=4\frac{i+\lambda-1/2}{i+1}a_{i+1}=4(1+\frac{\lambda-3/2}{i+1})a_{i+1}
$$
and
$$
b_{i-1}=4\frac{i+\lambda+1/2}{i+2\lambda-1}a_{i}=4(1-\frac{\lambda-3/2}{i+2\lambda-1})a_{i}.
$$
The substitution of these relations in \eqref{recurge4} leads to the following
\begin{align*}
B_i\hat p^{(\lambda)}_i(t)&=4(1+\frac{\lambda-3/2}{i+1})a_{i+1}\hat p_{i+1}^{(\lambda)}(t)+4(1-\frac{\lambda-3/2}{i+2\lambda-1})a_{i} \hat p_{i-1}^{(\lambda)}(t)+4\hat p_i^{(\lambda)}(t)\\&=
4(1+t+\frac{\lambda-3/2}{i+1}t)\hat p_i^{(\lambda)}(t)-8a_i\frac{(\lambda-3/2)(\lambda+i)}{(i+1)(i+2\lambda-1)}\hat p_{i-1}^{(\lambda)}(t)).
\end{align*}
Using the first equation in \cite[equation~(4.7.28)]{Szego} gives
$$
\frac{d}{dt}\hat p_{i-1}^{(\lambda)}(t)=2\frac{(i+\lambda-1)a_{i}}{i}(t\frac{d}{dt}\hat p_{i}^{(\lambda)}(t)-i \hat p_{i}^{(\lambda)}(t))
$$
so we find
\begin{align*}
\frac{d}{dt}B_i\hat p^{(\lambda)}_i(t)&=
4\frac{d}{dt}(1+t+\frac{\lambda-3/2}{i+1}t)\hat p_i^{(\lambda)}(t)-8a_i\frac{(\lambda-3/2)(\lambda+i)}{(i+1)(i+2\lambda-1)}\frac{d}{dt}\hat p_{i-1}^{(\lambda)}(t))\\&=4(\lambda-1/2)\hat p^{(\lambda)}_i(t)+4(1+t)\frac{d}{dt}\hat p^{(\lambda)}_i(t).
\end{align*}
Thus we have 
\begin{align*}
&((1-t)\frac{d}{dt}-(\lambda+1/2))\frac{d}{dt} B_i\hat p^{(\lambda)}_i(t)\\&=
4((1-t^2)\frac{d^2}{dt^2}-(2\lambda+1)t\frac{d}{dt}-(\lambda^2-1/4))\hat p^{(\lambda)}_i(t)
\end{align*}
and the result follows.
\end{proof}

\begin{remark} At first, we can see that equation \eqref{recurge} has the form
\begin{align*}
\tilde A_if^{(\lambda)}_{i,j}=(j+\lambda-\frac{1}{2})(j+\lambda+\frac{1}{2})B_if^{(\lambda)}_{i,j},
\end{align*}
where
\begin{equation*}
\tilde A_i=(i+\lambda-1/2)(i+\lambda+1/2)A_i,
\end{equation*}
the operators $A_i$ and $B_i$ are given by \eqref{tai} and \eqref{bi}, respectively. At second, the above-given proof shows that the three-term recurrence relation \eqref{recurge} is a consequence of the fact that ultraspherical polynomials are eigenfunctions of a second order differential operator of a specific form. However, there is another way to see the validity of equation \eqref{recurge}.
\end{remark}

We first prove the following statement.

\begin{proposition}\label{ultracoeff} The following representation holds
\begin{equation}\label{Ione}
f^{(\lambda)}_{i,j}=\begin{cases}
0,\,\, i< j;\\
\frac{1}{2^{3j+1}}\sqrt{\frac{i!(\lambda+1)_i(2\lambda)_i(2\lambda)_j}{j!(\lambda)_i(\lambda)_j(\lambda+1)_j}}\frac{(i+2\lambda)_j}{(\lambda+\frac{1}{2})_j(i-j)!}\hypergeom21{-i+j,\ i+j+2\lambda}{2j+2\lambda+1}{\frac{1}{2}},\, i\ge j.
\end{cases}
\end{equation}
\end{proposition}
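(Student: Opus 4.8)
The plan is to compute the integral \eqref{cijlam} directly by expanding both ultraspherical polynomials in suitable monomial-type bases and then identifying the resulting sum as a ${}_2F_1$ evaluated at $\tfrac12$. First I would record the single-sum hypergeometric representation of the orthonormal ultraspherical polynomial $\hat p_j^{(\lambda)}(2t-1)$ on $[0,1]$; the natural one here is the expansion in powers of $t$ (equivalently in $(1-t)$), since the weight $(t(1-t))^{\lambda-1/2}$ has its singularities at $0$ and $1$, and integrals of $t^{\,m}(t(1-t))^{\lambda-1/2}$ over $[0,1]$ are Beta integrals. Concretely, write $\hat p_j^{(\lambda)}(2t-1)=k_j^{(\lambda)}\,t^j\,{}_2F_1(-j,\,j+2\lambda;\,\lambda+\tfrac12;\,1-1/t)\cdot(\text{something})$ — more cleanly, use the known formula $C_j^{(\lambda)}(2t-1)=\binom{j+2\lambda-1}{j}{}_2F_1(-j,j+2\lambda;\lambda+\tfrac12;1-t)$ up to the normalization constant, so that $\hat p_j^{(\lambda)}(2t-1)=(\text{norm})\sum_{r=0}^j \frac{(-j)_r(j+2\lambda)_r}{(\lambda+\frac12)_r\,r!}(1-t)^r$.

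Second I would substitute this expansion into \eqref{cijlam}, interchange sum and integral, and use orthogonality of $\hat p_i^{(\lambda)}(t)$ on $[-1,1]$ in the form $\int_{-1}^1 \hat p_i^{(\lambda)}(t)(1-t)^r(1-t^2)^{\lambda-1/2}\,dt$. Here the subtlety is that the $t$-integral runs over $[0,1]$, not $[-1,1]$, and the weight is $(t(1-t))^{\lambda-1/2}=t^{\lambda-1/2}(1-t)^{\lambda-1/2}$, which is the $[0,1]$-weight, not the restriction of $(1-t^2)^{\lambda-1/2}$. The way around this is to expand $\hat p_i^{(\lambda)}(t)$ itself in powers of $(1-t)$ (again via a ${}_2F_1$ representation centered at $t=1$), so that the whole integrand becomes $(1-t)^{\lambda-1/2+\text{integer}}t^{\lambda-1/2}$, each term of which integrates to a Beta function $B(\lambda+\tfrac12+s,\ \lambda+\tfrac12+r)$. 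This produces a double sum over the expansion indices of $\hat p_i^{(\lambda)}$ and $\hat p_j^{(\lambda)}$; the inner sum (over the index coming from $\hat p_i^{(\lambda)}$) must then be collapsed. I expect it to be a balanced (Saalschützian) ${}_3F_2$ at $1$ summable by Pfaff–Saalschütz, leaving exactly the single ${}_2F_1(-i+j,\,i+j+2\lambda;\,2j+2\lambda+1;\,\tfrac12)$ in the statement, with the prefactor assembling into the square-root and Pochhammer combination shown. The vanishing for $i<j$ is automatic from \eqref{HalfZ} (already established: $\hat p_j^{(\lambda)}(2t-1)$ are orthogonal on $[0,1]$ with weight $(t(1-t))^{\lambda-1/2}$, so $f^{(\lambda)}_{i,j}$ is an integral of $\hat p_i^{(\lambda)}(t)=\sum_{m\le i}(\cdots)\hat p_m^{(\lambda)}(2t-1)$ against $\hat p_j^{(\lambda)}(2t-1)$, which is zero when $j>i$), and it also falls out of the ${}_2F_1$ terminating form: $(-i+j)_r$ truncates the first-argument expansion.

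Alternatively — and this may be cleaner to write up — I would verify \eqref{Ione} by checking that the right-hand side satisfies the wave equation \eqref{dWaveUP} together with the correct initial data $f^{(\lambda)}_{i,0}=\int_0^1\hat p_i^{(\lambda)}(t)(t(1-t))^{\lambda-1/2}\,dt$ and the boundary relation $f^{(\lambda)}_{j,j}=k_j^{(\lambda)}\cdot(\text{leading-coefficient ratio})$, since by the Remark after Theorem \ref{GdWaveTH} the wave equation plus initial condition determines $f^{(\lambda)}_{i,j}$ uniquely; this reduces the claim to contiguous-relation identities for ${}_2F_1$'s at argument $\tfrac12$. The main obstacle in either route is the same: carrying out the hypergeometric bookkeeping so that the double sum genuinely collapses to the advertised single ${}_2F_1$ — i.e., recognizing the right inner sum as Saalschützian and getting all the Gamma/Pochhammer factors, including the $2^{-(3j+1)}$ and the square-root normalization, to line up. Everything else (interchanging sums and integrals, Beta-integral evaluations, the $i<j$ case) is routine. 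I would present the direct-integration proof, since it also explains the appearance of the argument $\tfrac12$ transparently.
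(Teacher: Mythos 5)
Your primary route is essentially the paper's proof: both polynomials are written as terminating ${}_2F_1$'s (the paper takes $p_i^{(\lambda)}(t)$ with argument $\frac{1-t}{2}$ and $p_j^{(\lambda)}(2t-1)$ with argument $t$, after a parity flip), term-by-term integration gives Beta functions, the inner sum collapses by a classical summation theorem, and the outer sum is recognized as the ${}_2F_1$ at $\frac{1}{2}$. The only differences are details: in the paper the $(\lambda+\frac{1}{2})_n$ factors cancel so the inner sum is a ${}_2F_1(1)$ evaluated by Chu--Vandermonde rather than a Saalsch\"utzian ${}_3F_2$, and the $i<j$ case is disposed of by orthogonality exactly as you indicate.
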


\begin{proof} Write
\begin{equation}\label{cijlama}
f^{(\lambda)}_{i,j}=k_{i,j,\lambda}\int_0^1 p_i^{(\lambda)}(t)p_j^{(\lambda)}(2t-1)(t(1-t))^{\lambda-1/2}dt,
\end{equation}
where $p_n^{(\lambda)}$ is the monic orthogonal polynomial and
\begin{equation}\label{norm}
k_{i,j,\lambda}=\frac{\Gamma(\lambda+1)}{\Gamma(\lambda+\frac{1}{2})\sqrt{\pi}}2^{i+j+2\lambda+1}\sqrt{\frac{(\lambda)_i(\lambda+1)_i}{i!(2\lambda)_i}}\sqrt{\frac{(\lambda)_j(\lambda+1)_j}{j!(2\lambda)_j}}.
\end{equation}
If we denote the integral in equation~\eqref{cijlama} as $I^{(1)}$ we find using the representation
\[
p^{(\lambda)}_i(t)=2^i\frac{(\lambda+\frac{1}{2})_i}{(i+2\lambda)_i}\hypergeom21{-i,\ i+2\lambda}{\lambda+\frac{1}{2}}{\frac{1-t}{2}},
\]
and set
\begin{equation}\label{ione}
I^{(1)}=2^{i+j}(-1)^j\frac{(\lambda+\frac{1}{2})_i}{((i+2\lambda)_i)}\frac{(\lambda+\frac{1}{2})_j}{((j+2\lambda)_j)}I^{(2)},
\end{equation}
with
\begin{align*}\label{itwo}
I^{(2)}&=\int_0^1\hypergeom21{-i,\ i+2\lambda}{\lambda+\frac{1}{2}}{\frac{1-t}{2}}\hypergeom21{-j,\ j+2\lambda}{\lambda+\frac{1}{2}}{t}(t(1-t))^{\lambda-1/2}dt\\&=\sum_{k=0}^i\frac{(-i)_k(i+2\lambda)_k}{(1)_k(\lambda+\frac{1}{2})_k 2^k}\sum_{n=0}^j\frac{(-j)_n(j+2\lambda)_n}{(1)_n(\lambda+\frac{1}{2})_n}\int_0^1(1-t)^{k+\lambda-1/2}t^{n+\lambda-1/2}dt.
\end{align*}
The integral can be evaluated as $\frac{\Gamma(k+\lambda+\frac{1}{2})\Gamma(n+\lambda+\frac{1}{2})}{\Gamma(k+n+2\lambda+1)}=\frac{(\lambda+\frac{1}{2})_k(\lambda+\frac{1}{2})_n\Gamma(\lambda+\frac{1}{2})^2}{(2\lambda+1)_k(k+2\lambda+1)_n\Gamma(2\lambda+1)}$. From the Chu-Vandermonde formula the sum on $n$ yields
$$
\sum_{n=0}^j\frac{(-j)_n(j+2\lambda)_n}{(1)_n(k+2\lambda+1)_n}=\frac{(k-j+1)_j}{(k+2\lambda+1)_j},
$$
and the sum on $k$ now becomes
$$
\sum_{k=j}^i\frac{(-i)_k(i+2\lambda)_k(k-j+1)_j}{(1)_k(2\lambda+1)_k(k+2\lambda+1)_j2^k}=\sum_{k=0}^{i-j}\frac{(-i)_{k+j}(i+2\lambda)_{k+j}(k+1)_j}{(1)_{k+j}(2\lambda+1)_{k+j}(k+j+2\lambda+1)_j2^{k+j}}.
$$
With the identities
$$
(k+b)_j=\frac{(j+b)_k (b)_j}{(b)_k},\ (a)_{k+j}=(a+j)_k (a)_j,
$$
the above sum becomes
\begin{align*}
&\sum_{k=0}^{i-j}\frac{(-i)_{k+j}(i+2\lambda)_{k+j}(k+1)_j}{(1)_{k+j}(2\lambda+1)_{k+j}(k+j+2\lambda+1)_j2^{k+j}}\\=&\frac{(-i)_j(i+2\lambda)_j(1)_j}{(2\lambda+1)_j(j+2\lambda+1)_j}\sum_{k=0}^{i-j}\frac{(-i+j)_k(i+j+2\lambda)_j}{(1)_j(2j+2\lambda+1)_k}\frac{1}{2^k}\\=&\frac{(-i)_j(i+2\lambda)_j(1)_j}{(2\lambda+1)_j(j+2\lambda+1)_j}\hypergeom21{-i+j,\ i+j+2\lambda}{2j+2\lambda+1}{\frac{1}{2}}.
\end{align*}
Combining all this together gives the result.
\end{proof}

The above hypergeometric representation \eqref{Ione} for $f^{(\lambda)}_{i,j}$ gives a recurrence relation among them. 

\begin{proof}[Another Proof of Theorem \ref{UGEPth}]
To see this use the contiguous relation (see \cite[equation (2.5.15)]{AAR}) 
\begin{align*}
&2b(c-b)(b-a-1)\hypergeom21{a-1,\ b+1}{c}{\frac{1}{2}}\\&-(b-a)(b+a-1)(2c-b-a-1)\hypergeom21{a,\ b}{c}{\frac{1}{2}}\\&-2a(b-c)(b-a+1)\hypergeom21{a+1,\ b-1}{c}{\frac{1}{2}}=0,
\end{align*}
which with $a=-i+j,\ b=i+j+2,$ and $c=2j+2\lambda+1$ yields the relation
\begin{align}
&(2i+2\lambda-1)(j+2\lambda+1)\sqrt{\frac{i+2\lambda}{(i+1)(i+\lambda+1)(\lambda+i)}}(i+1-j)f^{(\lambda)}_{i+1,j}\nonumber\\&-(2j+2\lambda-1)(2j+2\lambda+1)f^{(\lambda)}_{i,j}\nonumber\\&+(2i+2\lambda+1)(i-j-1)\sqrt{\frac{i}{(i-1+\lambda)(i-1+2\lambda)(\lambda+i)}}(i+j-2\lambda-1)f^{(\lambda)}_{i-1,j}=0.
\end{align}
The latter relation leads to \eqref{recurge}.
\end{proof}

A generalized eigenvalue problem can also be found for $i$ fixed. To this end we need to use the relation 
\[
\hypergeom21{-n,\ b}{c}{x}=\frac{(b)_n}{(c)_n}(-x)^n\hypergeom21{-n,\ -c-n+1}{-b-n+1}{1/x}.
\] 
Therefore we find
\begin{align}\label{althyp}
&\hypergeom21{-i+j,\ i+j+2\lambda}{2j+2\lambda+1}{1/2}\nonumber\\&=\frac{(i+j+2\lambda)_{i-j}}{(2j+2\lambda+1)_{i-j}}(-2)^{j-i}\hypergeom21{-i+j,\ -i-j-2\lambda}{-2i-2\lambda+1}{2}.
\end{align}
Following the steps used to obtain the recurrence formula for $j$ fixed in the second proof we find that 
\begin{align*}
c_j f^{(\lambda)}_{i,j-1}+d_jf^{(\lambda)}_{i,j+1}+e_jf^{(\lambda)}_{i,j}=0,
\end{align*}
where
$$
c_j=-2(i+j+2\lambda-1)(i-j+1)(2j+2\lambda+1)(j+\lambda+1),
$$
$$
d_j=-4(i-j-1)(i+j+2\lambda+1)(j+\lambda-\frac{1}{2})\sqrt{\frac{j(j+1)(j+\lambda-1)(j+\lambda+1)}{(j+2\lambda-1)(j+2\lambda)}},
$$
and
\begin{align*}
e_j&=-2(2i+2\lambda+1)(2i+2\lambda-1)(j+\lambda+1)\sqrt{\frac{j(j+\lambda)(j+\lambda-1)}{(j+2\lambda-1)}}\\&+6(2j+2\lambda-1)(2j+2\lambda+1)(j+\lambda+1)\sqrt{\frac{j(j+\lambda-1)(j+\lambda)}{(j+2\lambda-1)}}.
\end{align*}
Since
$$
(i+j+2\lambda\mp1)(i-j\pm1)=(i+\lambda+\frac{1}{2})(i+\lambda-\frac{1}{2})-(j+\lambda\mp\frac{1}{2})(j+\lambda\mp\frac{3}{2})
$$
the above recurrence can be recast as the generalized eigenvalue equation
\begin{equation*}
\hat A_jf^{(\lambda)}_{i,j}=(i+\lambda+\frac{1}{2})(i+\lambda-\frac{1}{2})\hat B_jf^{(\lambda)}_{i,j},
\end{equation*}
where the operator $\hat A_j$ is the second order difference operator
\begin{align}\label{hai}
\hat A_j&=(2j+2\lambda+1)(2j+2\lambda-1))\bigg(3(j+\lambda+1)I\nonumber\\&+(j+\lambda+\frac{3}{2})\sqrt{\frac{(j+1)(j+\lambda+1)}{(j+\lambda)(j+2\lambda)}}
\hat E_+\\&+(j+\lambda-\frac{3}{2})(j+\lambda+1)\sqrt{\frac{(j+2\lambda-1)}{j(j+\lambda)(j+\lambda-1)}}\hat E_-\bigg)\nonumber,
\end{align}
the operator $\hat B_j$ is another second order difference operator given by the formula
\begin{align}\label{hbi}
\hat B_j&=4(j+\lambda+1)I\nonumber\\&+(2j+2\lambda-1)\sqrt{\frac{(j+1)(j+\lambda+1)}{(j+\lambda)(j+2\lambda)}}\hat E_+\\&+(2j+2\lambda+1)(j+\lambda+1)\sqrt{\frac{(j+2\lambda-1)}{j(j+\lambda)(j+\lambda-1)}}\hat E_-\bigg)\nonumber,
\end{align}
the operator $I$ is the identity operator, and $\hat E_+$, $\hat E_-$ are the forward and backward shift operators on $j$, respectively. Thus we have just proved the following statement.

\begin{theorem}\label{ifixedTH}
Let $i$ be a fixed nonnegative integer number. Then the function $f=f(j)=f^{(\lambda)}_{i,j}$ of the discrete variable $j$ satisfies the generalized eigenvalue problem 
\[
\hat A_jf^{(\lambda)}_{i,j}=(i+\lambda+\frac{1}{2})(i+\lambda-\frac{1}{2})\hat B_jf^{(\lambda)}_{i,j}
\]
for $i=0$, $1$, $2$, \dots and where the operators $\hat A_j$ and $\hat B_j$ are given by \eqref{hai} and \eqref{hbi}, respectively.
Also, here, $(i+\lambda+\frac{1}{2})(i+\lambda-\frac{1}{2})$ is the corresponding generalized eigenvalue. 
\end{theorem}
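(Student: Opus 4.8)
The statement is obtained by converting the closed form of Proposition~\ref{ultracoeff} into a three-term recurrence in the index $j$, in exact parallel with the \emph{Another Proof of Theorem~\ref{UGEPth}} but with the roles of the two discrete variables interchanged. The first move is to pass from \eqref{Ione} to the more convenient representation \eqref{althyp}: in \eqref{Ione} the variable $j$ enters the ${}_2F_1$ through all three parameters $-i+j$, $i+j+2\lambda$, $2j+2\lambda+1$ as well as through the Pochhammer prefactor, so the shift $j\mapsto j\pm 1$ is unwieldy, whereas applying the terminating transformation recalled just before \eqref{althyp} (with $n=i-j$) rewrites $f^{(\lambda)}_{i,j}$ via a ${}_2F_1$ of argument $2$ whose bottom parameter $-2i-2\lambda+1$ and one top parameter $-i-j-2\lambda$ depend on $j$ in precisely the way the parameters of the $j$-fixed representation depended on $i$. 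Consequently the mechanism of the second proof of Theorem~\ref{UGEPth} can be transcribed almost verbatim.

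The second step is to apply the three-term contiguous relation \cite[equation (2.5.15)]{AAR} used in that proof to the representation \eqref{althyp}, choosing its parameters $a,b,c$ so that the three hypergeometric functions that occur are the ones attached to $f^{(\lambda)}_{i,j-1}$, $f^{(\lambda)}_{i,j}$ and $f^{(\lambda)}_{i,j+1}$. Restoring the normalizing prefactors of \eqref{Ione} — whose successive ratios are products of Pochhammer symbols and factorials in $j$, and therefore supply the square-root factors visible in \eqref{hai} and \eqref{hbi} — produces the scalar recurrence $c_j f^{(\lambda)}_{i,j-1}+d_j f^{(\lambda)}_{i,j+1}+e_j f^{(\lambda)}_{i,j}=0$ with the coefficients $c_j$, $d_j$, $e_j$ exhibited before the theorem.

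The third step separates out the eigenvalue. The $i$-dependent pieces $(2i+2\lambda+1)(2i+2\lambda-1)$ occurring in $e_j$ and $(i+j+2\lambda\mp1)(i-j\pm1)$ occurring in $c_j$, $d_j$ are rewritten by means of $(2i+2\lambda+1)(2i+2\lambda-1)=4(i+\lambda+\frac{1}{2})(i+\lambda-\frac{1}{2})$ together with the identity $(i+j+2\lambda\mp1)(i-j\pm1)=(i+\lambda+\frac{1}{2})(i+\lambda-\frac{1}{2})-(j+\lambda\mp\frac{1}{2})(j+\lambda\mp\frac{3}{2})$ displayed above; each coefficient thereby splits into a multiple of the number $(i+\lambda+\frac{1}{2})(i+\lambda-\frac{1}{2})$ plus a purely $j$-dependent remainder. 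Collecting the first parts into one second-order difference operator in $j$ and the remainders into another, and clearing a common $j$-dependent scalar factor, recasts the recurrence as $\hat A_j f^{(\lambda)}_{i,j}=(i+\lambda+\frac{1}{2})(i+\lambda-\frac{1}{2})\hat B_j f^{(\lambda)}_{i,j}$ with $\hat A_j$ and $\hat B_j$ as in \eqref{hai} and \eqref{hbi}, which is the assertion.

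I expect the only real difficulty to be the bookkeeping in the middle steps: forming the ratios $f^{(\lambda)}_{i,j\pm1}/f^{(\lambda)}_{i,j}$ once the hypergeometric factors are stripped off, keeping track of the two arguments ($1/2$ for \eqref{Ione} and $2$ for \eqref{althyp}) and of the induced parameter shifts, and verifying that after the splitting identity all the Pochhammer and gamma factors coalesce exactly into the stated $\hat A_j$ and $\hat B_j$ with no residue. A minor additional point is the degenerate range (the cases $j=0$ and $i=j$): with the usual convention $f^{(\lambda)}_{i,-1}=0$ one checks that the coefficient multiplying the out-of-range term is finite, so the recurrence — and hence the generalized eigenvalue equation — remains valid for all $i,j\geq 0$.
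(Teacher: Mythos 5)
Your proposal follows essentially the same route as the paper: converting \eqref{Ione} to the argument-$2$ representation \eqref{althyp} via the terminating transformation, applying the same type of contiguous relation (with $c$ fixed and $a,b$ shifted oppositely) to obtain the three-term recurrence $c_j f^{(\lambda)}_{i,j-1}+d_jf^{(\lambda)}_{i,j+1}+e_jf^{(\lambda)}_{i,j}=0$, and then using the identity $(i+j+2\lambda\mp1)(i-j\pm1)=(i+\lambda+\frac{1}{2})(i+\lambda-\frac{1}{2})-(j+\lambda\mp\frac{1}{2})(j+\lambda\mp\frac{3}{2})$ to split off the generalized eigenvalue and assemble $\hat A_j$ and $\hat B_j$. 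This matches the paper's derivation, and your remarks about tracking the change of argument from $\frac{1}{2}$ to $2$ and the boundary cases are consistent with (indeed slightly more careful than) the paper's own terse treatment.
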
 
\begin{remark} For the case $\lambda=1/2$, Theorem \ref{ifixedTH} was obtained in \cite{GM15}. 
\end{remark}

Using the asymptotic results for the Gauss hypergeometric function from \cite{J01} and \cite{W18} (see also \cite{P1}, \cite{T03}) one can easily get asymptotic behavior of the solution $f^{(\lambda)}_{i,j}$ for  $j$ fixed and when $i$ tends to infinity.

\begin{theorem}
For sufficiently large $i$ the following formula holds
\begin{equation}\label{asymi}
f^{(\lambda)}_{i,j}=k_j\frac{\cos\left(\pi\left(j + \frac{\lambda}{2} - \frac{i}{2} + \frac{1}{4}\right)\right)}{\sqrt{\pi}i^{\lambda + 1/2}}+O\left(\frac{1}{i^{\lambda+3/2}}\right),
\end{equation}
where
\begin{equation}\label{ksubj}
k_j=\frac{1}{2^{j+1-2\lambda}}\sqrt{\frac{(2\lambda)_j}{j!(\lambda)_j(\lambda+1)_j\lambda\Gamma(2\lambda)}}\Gamma(2j+2\lambda+1)(\lambda+\frac{1}{2})_j .
\end{equation}
\end{theorem}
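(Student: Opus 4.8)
The starting point is the closed form of Proposition~\ref{ultracoeff}: for $i\ge j$,
\[
f^{(\lambda)}_{i,j}=\Pi_{i,j}\;\hypergeom21{-i+j,\ i+j+2\lambda}{2j+2\lambda+1}{\frac{1}{2}},
\]
where
\[
\Pi_{i,j}:=\frac{1}{2^{3j+1}}\sqrt{\frac{i!\,(\lambda+1)_i(2\lambda)_i(2\lambda)_j}{j!\,(\lambda)_i(\lambda)_j(\lambda+1)_j}}\;\frac{(i+2\lambda)_j}{(\lambda+\frac{1}{2})_j\,(i-j)!}.
\]
The plan is to expand the two factors separately as $i\to\infty$ with $j$ fixed and then multiply: the hypergeometric factor will carry the oscillation and the power $i^{-1/2}$, while $\Pi_{i,j}$ will supply the remaining $i^{-\lambda}$ and most of the constant $k_j$.

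For $\Pi_{i,j}$ I would write each factorial and Pochhammer symbol as a ratio of Gamma functions and apply Stirling in the form $\Gamma(i+a)/\Gamma(i+b)=i^{a-b}(1+O(1/i))$. After cancelling the $(i-j)!$ in $\Pi_{i,j}$ against the matching $(i-j)!$ that appears when the hypergeometric value is rewritten through a Jacobi polynomial (see below), the surviving $i$-dependence collapses to one half-integer power of $i$ — a direct count gives $i^{-\lambda}$ — times an explicit constant assembled from Gamma and Pochhammer symbols in $j$ and $\lambda$.

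For the hypergeometric factor the point is that, as $i\to\infty$, its first parameter goes to $-\infty$ and its second to $+\infty$ while their sum and the bottom parameter remain fixed and the argument is pinned at $\frac{1}{2}$; this is precisely the regime for which uniform large-parameter asymptotics of the Gauss function are available in \cite{J01}, \cite{W18} (see also \cite{P1}, \cite{T03}). Equivalently, up to an explicit ratio of Pochhammer symbols this hypergeometric value is the Jacobi polynomial $P^{(2j+2\lambda,-1)}_{i-j}(0)=P^{(2j+2\lambda,-1)}_{i-j}(\cos(\pi/2))$, so Darboux's asymptotic formula for Jacobi polynomials at an interior point (\cite{Szego}) applies. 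Either route produces a leading term of the form $\pi^{-1/2}\,2^{\,j+\lambda}\,i^{-1/2}$ times $\cos$ of the linear phase $(i+\lambda)\frac{\pi}{2}-(2j+2\lambda+\frac{1}{2})\frac{\pi}{2}$, with relative error $O(1/i)$; since cosine is even this phase equals $\pi(j+\frac{\lambda}{2}-\frac{i}{2}+\frac{1}{4})$, the $\pi^{-1/2}$ gives the $\sqrt{\pi}$ in \eqref{asymi}, and the amplitude $2^{j+\lambda}$ comes from $(\sin\frac{\pi}{4})^{-(2j+2\lambda)-1/2}(\cos\frac{\pi}{4})^{-(-1)-1/2}$ in Darboux's constant.

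Multiplying the two expansions gives $i^{-\lambda}\cdot i^{-1/2}=i^{-\lambda-1/2}$ and, from the product of the relative $O(1/i)$ corrections with the leading term, the stated error $O(i^{-\lambda-3/2})$; the $j$-dependent constants combine into $k_j$ of \eqref{ksubj} after simplification with the duplication formula for $\Gamma$. I expect the main difficulty to be twofold and essentially computational. First, the remainder must be kept genuinely uniform in $i$ for each fixed $j$ — a bare $o(1)$ control of the \emph{second} term of the Jacobi/hypergeometric expansion is not enough, so one should cite the explicit error bounds of \cite{J01}, \cite{W18}, \cite{T03} rather than only the leading Darboux term. Second, nailing the constant $k_j$ exactly is a long Gamma/Pochhammer computation made slightly delicate by the fact that $\beta=-1$ is a degenerate value of the Jacobi parameters: one must use $P^{(\alpha,-1)}_{n}(x)=\frac{(\alpha+1)_n}{n!}\hypergeom21{-n,\ n+\alpha}{\alpha+1}{\frac{1-x}{2}}$ and not the reflected representation, which collapses. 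A convenient sanity check is the Legendre case $\lambda=\frac{1}{2}$, $j=0$, where $f^{(1/2)}_{i,0}=\frac{1}{\sqrt{2}}\int_0^1\hat{p}_i(t)\,dt$ can be evaluated explicitly and matched against \eqref{asymi}.
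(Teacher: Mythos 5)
Your proposal follows essentially the same route as the paper's proof: both start from the hypergeometric representation in Proposition~\ref{ultracoeff}, apply Stirling to the Pochhammer prefactor (the paper records this as $\sqrt{i!(\lambda+1)_i(2\lambda)_i/(\lambda)_i}\,\frac{(i+2\lambda)_j}{(i-j)!}=\sqrt{1/(\lambda\Gamma(2\lambda))}\,i^{2j+\lambda}(1+O(1/i))$), and then invoke the large-parameter oscillatory asymptotics of the ${}_2F_1$ with first parameter tending to $-\infty$ and second to $+\infty$ — the paper cites formula (36) of \cite{J01}, which is the same expansion you reach through Darboux's formula for $P^{(2j+2\lambda,-1)}_{i-j}(0)$. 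Your phase and amplitude bookkeeping $\left(N\theta+\gamma=-\pi\left(j+\frac{\lambda}{2}-\frac{i}{2}+\frac{1}{4}\right)\right.$ at $\left.\theta=\pi/2\right)$ checks out, so the difference from the paper is only presentational.
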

\begin{proof}
According to Proposition \ref{ultracoeff} for $i\ge j$ we have
\begin{equation*}
f^{(\lambda)}_{i,j}=\frac{1}{2^{3j+1}}\sqrt{\frac{i!(\lambda+1)_i(2\lambda)_i(2\lambda)_j}{j!(\lambda)_i(\lambda)_j(\lambda+1)_j}}\frac{(i+2\lambda)_j}{(\lambda+\frac{1}{2})_j(i-j)!}\hypergeom21{-i+j,\ i+j+2\lambda}{2j+2\lambda+1}{\frac{1}{2}}.
\end{equation*}
Then, since 
$$
\sqrt{\frac{i!(\lambda+1)_i(2\lambda)_i}{(\lambda)_i}}\frac{(i+2\lambda)_j}{(i-j)!}=\sqrt{\frac{1}{\lambda \Gamma(2\lambda)}} i^{(2j+\lambda)}(1+O(1/i)), 
$$
formula \eqref{asymi} follows from \cite[formula (36)]{J01}.
\end{proof}
\begin{remark} Formula \eqref{asymi} along with the fact that $f^{(\lambda)}_{i,j}=0$ for  $i<j$ show that the moving wave behavior of the solution demonstrated in  Figure \ref{Fig1} is also characteristic for the solution $f^{(\lambda)}_{i,j}$ of the discrete wave equation \eqref{dWaveUP} for any $\lambda>-1/2$. 
\end{remark}

Another useful asymptotic is when $i=k_1 t$ and $j=k_2 t$ where $k_1>k_2$ are fixed and $t$ is large. 
\begin{theorem}
For $k_1 t$ and $k_2 t$  integers with  $k_1>k_2>0$, and $\frac{\sqrt{2}k_2}{k_1}>1$
\begin{equation}\label{fasyf}
f^{(\lambda)}_{k_1 t,k_2 t}=\frac{c(\epsilon,\lambda)}{2^{k_1 t+1} (k_1 t)^{\frac{1}{2}}}\left(\frac{1+\hat b(\epsilon)}{\epsilon-\hat b(\epsilon)}\right)^{(k_1-k_2)t}\left(\frac{1+2\epsilon-\hat b(\epsilon)}{1+\epsilon}\right)^{(k_1+k_2)t+2\lambda}(1+O(1/t)),
\end{equation}
where
\begin{equation}\label{ck1k2}
c(\epsilon,\lambda)=\epsilon^{\lambda}\frac{1}{\sqrt{\pi(1-\epsilon^2)(2\epsilon^2-1)^{\frac{1}{2}}}},
\end{equation}
$\epsilon=\frac{k_2}{k_1}$, and $\hat b(\epsilon)=\sqrt{2\epsilon^2-1}$. 
\end{theorem}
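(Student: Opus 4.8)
The natural starting point is the closed hypergeometric representation of Proposition~\ref{ultracoeff}. Writing $i=k_1t$, $j=k_2t$ (both integers, $t\to\infty$), that proposition gives
\[
f^{(\lambda)}_{k_1t,k_2t}=\frac{1}{2^{3k_2t+1}}\,R_{i,j}\,\hypergeom21{-i+j,\ i+j+2\lambda}{2j+2\lambda+1}{\frac12},\qquad
R_{i,j}=\sqrt{\frac{i!\,(\lambda+1)_i\,(2\lambda)_i\,(2\lambda)_j}{j!\,(\lambda)_i\,(\lambda)_j\,(\lambda+1)_j}}\,\frac{(i+2\lambda)_j}{(\lambda+\frac12)_j\,(i-j)!}.
\]
The plan is to find the asymptotics of the two factors separately. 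All three ``large'' parameters of the hypergeometric function grow linearly in $t$ while its argument stays fixed at $\tfrac12$; this is the regime treated in the asymptotic theory of the Gauss hypergeometric function with large parameters already used to prove \eqref{asymi} (the works \cite{J01}, \cite{W18}, \cite{P1}, \cite{T03}), which I would invoke in the case where numerator and denominator parameters all tend to infinity in fixed ratios.

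For the algebraic prefactor $R_{i,j}$ the work is routine: express every Pochhammer symbol as a ratio of Gamma functions, use the Legendre duplication formula $\Gamma(\lambda)\Gamma(\lambda+\tfrac12)=2^{1-2\lambda}\sqrt\pi\,\Gamma(2\lambda)$ to collapse the $\lambda$-dependent constants, and apply Stirling's approximation $\Gamma(at+b)\sim\sqrt{2\pi}\,(at)^{at+b-1/2}e^{-at}$ to each remaining factor. This writes $R_{k_1t,k_2t}$ as an explicit constant in $k_1,k_2,\lambda$, times a power of $t$, times an exponential $e^{t\rho_1}$ with explicit rate $\rho_1$, up to a multiplicative $1+O(1/t)$.

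The substantive step is the asymptotics of the \emph{terminating} hypergeometric function at argument $\tfrac12$. Regarded as the finite sum $\sum_{k=0}^{(k_1-k_2)t}t_k$ it is alternating, with $|t_k|$ increasing from $k=0$ to an interior maximum and then decreasing, so the leading behaviour comes not from the largest term but from a saddle point of the phase obtained by incorporating the sign $(-1)^k$ into the exponent: if $\psi$ denotes the per-$t$ phase of $(-1)^kt_k$ at $k=ut$, one finds $\psi'(u)=\log\!\big({-(k_1-k_2-u)(k_1+k_2+u)}/{[2u(2k_2+u)]}\big)$, and stationarity $\psi'(u)=0$ reduces to the quadratic $u^2+2k_2u+(k_1^2-k_2^2)=0$, with roots $u_\pm=-k_2\pm k_1\hat b(\epsilon)$, $\hat b(\epsilon)=\sqrt{2\epsilon^2-1}$. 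The hypothesis $\sqrt2\,\epsilon>1$ is precisely the condition that these be real; for $\sqrt2\,\epsilon<1$ they form a conjugate pair and one obtains a damped oscillation instead, consistent with the singularity of $c(\epsilon,\lambda)$ at $2\epsilon^2=1$. I would make this rigorous by representing the hypergeometric function as a Mellin--Barnes/Laplace-type contour integral (the factor $\pi/\sin\pi z$ producing the alternation), justifying the deformation onto the steepest-descent contour through the relevant saddle, and reading off the saddle-point contribution; the value of the phase at the saddle unpacks into the base $\big(\tfrac{1+\hat b}{\epsilon-\hat b}\big)^{(k_1-k_2)t}\big(\tfrac{1+2\epsilon-\hat b}{1+\epsilon}\big)^{(k_1+k_2)t+2\lambda}$ (the $2\lambda$-shift being carried by the subleading algebraic part of the integrand), and the Hessian, which simplifies to $\psi''(u_\pm)=(1+2\hat b(\epsilon))/(k_1(1-\epsilon^2))$, together with the subexponential factors produces, after simplification, the constant $c(\epsilon,\lambda)=\epsilon^{\lambda}\big(\pi(1-\epsilon^2)\sqrt{2\epsilon^2-1}\big)^{-1/2}$ and the overall power $(k_1t)^{-1/2}$.

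Finally one multiplies the three pieces $2^{-3k_2t-1}$, $R_{k_1t,k_2t}$ and the hypergeometric asymptotics, collects the exponential rates $-3k_2t\log2$, $t\rho_1$ and the saddle value into the claimed base $2^{-k_1t}\big(\tfrac{1+\hat b}{\epsilon-\hat b}\big)^{(k_1-k_2)t}\big(\tfrac{1+2\epsilon-\hat b}{1+\epsilon}\big)^{(k_1+k_2)t+2\lambda}$, and merges the $1+O(1/t)$ errors. The main obstacle is squarely the saddle analysis of the terminating ${}_2F_1$: selecting the dominant saddle among $u_\pm$ and justifying the admissible contour deformation for an alternating terminating series whose saddle lies outside the physical summation range, followed by the careful extraction of the exact constant, exponential base and power of $t$; the Gamma-function bookkeeping (duplication-formula simplification and the $2\lambda$-shifts) is the secondary difficulty. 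An essentially equivalent alternative would bypass the hypergeometric function and analyse the integral \eqref{cijlam} directly by a complex stationary-phase argument, using Darboux-type large-degree asymptotics of $\hat p^{(\lambda)}_i$ and $x\mapsto\hat p^{(\lambda)}_j(2x-1)$ on and off $[-1,1]$; the same quadratic and the same threshold $\sqrt2\,\epsilon>1$ would reappear.
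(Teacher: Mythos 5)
Your proposal is correct in substance but takes a genuinely different route from the paper, even though both ultimately rest on saddle-point asymptotics of a Gauss hypergeometric function with all parameters proportional to $t$. The paper does not attack the terminating ${}_2F_1$ of Proposition~\ref{ultracoeff} at argument $\tfrac12$ head-on: it first passes to the argument-$2$ form \eqref{althyp}, applies Paris's transformation T3 and then Euler's transformation to reach $f^{(\lambda)}_{i,j}=d_{i,j}\,\hypergeom21{i-j+1,\ -i-j-2\lambda+1}{2}{\frac12}$ with an elementary prefactor $d_{i,j}$, and then simply quotes Paris's ready-made ``type B'' saddle-point formula (equation (4.9) of \cite{P1}) for $\hypergeom21{\epsilon_1 w+1,\ -w-2\lambda+1}{2}{\frac12}$, whose saddles $\frac{1+\epsilon_1}{2}\pm\sqrt{(\frac{1+\epsilon_1}{2})^2-2\epsilon_1}$ are real precisely when $\sqrt2\,k_2>k_1$ --- the same threshold you find. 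You instead propose to run the steepest-descent analysis from scratch on the original alternating terminating sum; your quadratic $u^2+2k_2u+k_1^2-k_2^2=0$ and its roots $u_\pm=-k_2\pm k_1\hat b(\epsilon)$ are the correct saddle locations in that parametrization, and you correctly identify that the real-saddle condition reproduces the hypothesis and the singularity of $c(\epsilon,\lambda)$ at $2\epsilon^2=1$. What the paper's route buys is that the entire hard analytic content --- the contour representation, the deformation through saddles sitting at negative $u$ outside the summation range, the selection of the dominant saddle, and the extraction of the constant and the $w^{-3/2}$ power that combines with $d_{i,j}\sim\epsilon^\lambda k_1t$ to give $(k_1t)^{-1/2}$ --- is already done and citable in \cite{P1}; what your route buys is self-containedness and a transparent origin for $\hat b(\epsilon)$. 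The caveat is that in your write-up exactly this hard part is left as a program rather than executed, and the asserted Hessian $\psi''(u_\pm)$ is given without derivation (and is suspect in being identical at both saddles); to close the argument you would either have to carry out that contour analysis in full or, as the paper does, transform the ${}_2F_1$ into a canonical form for which the analysis already exists in the literature.
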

\begin{proof}
In this case the representation given by equation~\eqref{althyp} is most convenient. An application of the transformation T3 in \cite{P1} yields
\begin{align*}
&\hypergeom21{-i+j,\ -i-j-2\lambda}{-2i-2\lambda+1}{2}\nonumber\\&=\frac{2^{i-j-1}(i-j)!}{(i+j+2\lambda+1)_{i-j-1}}\hypergeom21{-i+j+1,\ i+j+2\lambda+1}{2}{1/2}\nonumber\\&=-\frac{2^{i+j+2\lambda-1}(i-j)!}{(i+j+2\lambda+1)_{i-j-1}}\hypergeom21{i-j+1,\ -i-j-2\lambda+1}{2}{1/2},
\end{align*}
where Euler's transformation has been used to obtain the last equality. 
Thus with the use of the duplication formula for the $\Gamma $ function it follows
\begin{equation}\label{asyf}
f^{(\lambda)}_{i,j}=d_{i,j}\hypergeom21{i-j+1,\ -i-j-2\lambda+1}{2}{1/2},
\end{equation}
where
$$
d_{i,j}=(-1)^{i-j+1}2^{j+2\lambda-1}\sqrt{\frac{(i+\lambda)(j+\lambda)i!\Gamma(2\lambda+j)}{j!\Gamma(2\lambda+i)}}.
$$
This becomes
\begin{align}\label{dijtl}
d_{k_1 t,k_2 t}&=(-1)^{(i-j+1}2^{j+2\lambda-1}\left(\frac{j^{2\lambda}}{i^{2\lambda-2}}\right)^{1/2}(1+O(1/i)\nonumber\\&=(-1)^{(k_1-k_2)t+1}2^{k_2 t+2\lambda-1}\left(\frac{k_2}{k_1}\right)^{\lambda} (k_1 t) (1+O(1/t)).
\end{align}

The hypergeometric function on the right hand side of equation~\eqref{asyf} is in the form to use the type B formulas in \cite{P1} and leads to considering the hypergeometric function $\hypergeom21{\epsilon_1 w+1,\ -w-2\lambda+1}{2}{1/2}$ where $\epsilon_1 w$ is an integer. Equation~(4.4) in \cite{P1} shows that the saddle points occur at $\frac{1+\epsilon_1}{2}\pm\sqrt{(\frac{1+\epsilon_1}{2})^2-2\epsilon_1}$. If the discriminant is positive both saddles are real and equation (4.9) in \cite{P1} yields
\begin{align*} 
&\hypergeom21{\epsilon_1 w+1,\ -w-2\lambda+1}{2}{1/2}\\&=\frac{(-1)^{\epsilon_1 w+1}}{w^{\frac{3}{2}}\sqrt{\pi\epsilon_1 b(\epsilon_1)}}(\frac{1+\epsilon_1+b(\epsilon_1)}{1-\epsilon_1-b(\epsilon_1)})^{\epsilon_1 w}\frac{(3-\epsilon_1-b(\epsilon_1))^{w+2\lambda}}{2^{w+4\lambda+\frac{1}{2}}}(1+O(1/w)),
\end{align*}
where
\begin{equation}\label{bep1}
b(\epsilon_1)=\sqrt{(1+\epsilon_1)^2-8\epsilon_1}.
\end{equation}
With $\epsilon_1=\frac{k_1-k_2}{k_1+k_2}$ and $w=(k_1+k_2)t$ the above equations yield \eqref{fasyf}.

\end{proof}

\begin{remark}
When  the discriminant is negative, the  two saddle points are conjugates of each other and so in this case equation~(4.7) in \cite{P1} is used to obtain the asymptotics for $\hypergeom21{\epsilon_1 t+1,\ -t-2\lambda+1}{2}{1/2}$ which then are used to obtain the asymptotics of $f^{(\lambda)}_{k_1 t,k_2 t}$.
\end{remark}

We finish this section with a couple of statements where we start with the recurrence formulas. Write the recurrence formula in equation~\eqref{recurge} as
\begin{equation}\label{iirecur}
a_{i,j}f^{(\lambda)}_{i+1,j}+b_{i,j}f^{(\lambda)}_{i,j}+c_{i,j}f^{(\lambda)}_{i-1,j}=0,
\end{equation}
and the recurrence formula in $j$ as
\begin{equation}\label{jjrecur}
\hat a_{i,j}f^{(\lambda)}_{i,j+1}+\hat b_{i,j}f^{(\lambda)}_{i,j}+\hat c_{i,j}f^{(\lambda)}_{i,j-1}=0,
\end{equation} with $i\ge j\ge0$.

We can now prove  the following simple statement.
\begin{proposition}\label{iiirecur}
Given $a_{i,j}$, $b_{i,j}$, $c_{i,j}$ and $\lambda>-1/2$. For each $j>0$ the unique solution of equation~\eqref{iirecur} with initial conditions
\[
f_{j-1,j}=0, \quad f_{j,j}=\int_0^1 \hat p^{(\lambda)}_j(t)\hat p^{(\lambda)}_j(2t-1)(t(1-t))^{\lambda-1/2}dt
\]
is the function
\[
f_{i,j}=I^{(\lambda)}_{i,j}:=\int_0^1 \hat p^{(\lambda)}_i(t)\hat p^{(\lambda)}_j(2t-1)(t(1-t))^{\lambda-1/2}dt.
\]
 If $j=0$, 
 $\lambda>-1/2$, and $\lambda\ne1/2$ then $f_{0,0}=I_{0,0}^{(\lambda)} $ gives the unique solution $f_{i,0}=I^{(\lambda)}_{i,0}$. If $\lambda=1/2$ then the initial conditions $f_{0,0}= I^{(1/2)}_{0,0}$ and $f_{1,0}= I^{(1/2)}_{1,0}$ are needed to give  $f_{i,j}=I^{(1/2)}_{i,j}$.
\end{proposition}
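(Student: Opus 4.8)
The plan is to observe that $I^{(\lambda)}_{i,j}$ is, by its very definition \eqref{cijlam}, the function $f^{(\lambda)}_{i,j}$ itself; that by Theorem~\ref{UGEPth} it already satisfies the three-term recurrence \eqref{iirecur} in $i$; and that it vanishes at $i=j-1$ by the $i<j$ branch of Proposition~\ref{ultracoeff}. Everything then reduces to showing that the prescribed initial data determine a solution of \eqref{iirecur} uniquely, which in turn reduces to locating the zeros of the leading coefficient $a_{i,j}$. First I would read off from \eqref{recurge} that
\[
a_{i,j}=2\sqrt{\frac{i+2\lambda}{(i+1)(i+\lambda+1)(i+\lambda)}}\,\Big[\big((i+\lambda)^2-\tfrac{1}{4}\big)\big(i+\lambda+\tfrac{3}{2}\big)-\big((j+\lambda)^2-\tfrac{1}{4}\big)\big(i+\lambda-\tfrac{1}{2}\big)\Big],
\]
and then, pulling out the common factor $i+\lambda-\tfrac{1}{2}$ and applying a one-line difference-of-squares identity, $\big(i+\lambda+\tfrac{1}{2}\big)\big(i+\lambda+\tfrac{3}{2}\big)-\big((j+\lambda)^2-\tfrac{1}{4}\big)=(i+\lambda+1)^2-(j+\lambda)^2=(i+1-j)(i+1+j+2\lambda)$, I would rewrite it as
\[
a_{i,j}=2\,\big(i+\lambda-\tfrac{1}{2}\big)\big(i+1-j\big)\big(i+1+j+2\lambda\big)\sqrt{\frac{i+2\lambda}{(i+1)(i+\lambda+1)(i+\lambda)}}.
\]
I would also record that $b_{i,j}=-4\big((j+\lambda)^2-\tfrac{1}{4}\big)$ depends on $j$ only, and that $c_{0,j}=0$ because the backward-shift coefficients in \eqref{recurge} carry a factor $\sqrt{i}$.

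For fixed $j\ge 1$ the argument runs as follows. When $i\ge j$ and $\lambda>-\tfrac{1}{2}$, each of $i+\lambda-\tfrac{1}{2}$, $i+1-j$, $i+1+j+2\lambda$ and the radicand is strictly positive, so $a_{i,j}>0$. Hence the $i$-th instance of \eqref{iirecur} solves for $f_{i+1,j}$ in terms of $f_{i,j}$ and $f_{i-1,j}$ at every step $i\ge j$, and the two prescribed values $f_{j-1,j}$ and $f_{j,j}$ therefore determine a unique sequence $(f_{i,j})_{i\ge j-1}$. Since $I^{(\lambda)}_{\cdot,j}$ satisfies \eqref{iirecur}, vanishes at $i=j-1$, and equals the prescribed value at $i=j$, it must be that unique solution, that is, $f_{i,j}=I^{(\lambda)}_{i,j}$.

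For $j=0$: since $c_{0,0}=0$, the first instance of \eqref{iirecur} (at $i=0$) reads $a_{0,0}f_{1,0}+b_{0,0}f_{0,0}=0$, with $a_{0,0}=4\sqrt{2/(\lambda+1)}\,\big(\lambda-\tfrac{1}{2}\big)\big(\lambda+\tfrac{1}{2}\big)$ and $b_{0,0}=-4\big(\lambda-\tfrac{1}{2}\big)\big(\lambda+\tfrac{1}{2}\big)$. If $\lambda\ne\tfrac{1}{2}$, then $a_{0,0}\ne 0$, so $f_{1,0}$ is forced by $f_{0,0}$; combined with $a_{i,0}>0$ for all $i\ge 1$, the whole sequence is then determined by the single datum $f_{0,0}=I^{(\lambda)}_{0,0}$, and the same comparison as before yields $f_{i,0}=I^{(\lambda)}_{i,0}$. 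If $\lambda=\tfrac{1}{2}$, then $a_{0,0}=b_{0,0}=0$, so the $i=0$ relation collapses to $0=0$ and $f_{1,0}$ is a genuine free parameter; prescribing in addition $f_{1,0}=I^{(1/2)}_{1,0}$ and again using $a_{i,0}>0$ for $i\ge 1$ pins down the rest, giving $f_{i,0}=I^{(1/2)}_{i,0}$.

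I do not anticipate a serious obstacle; the one point that needs care is the degeneracy at $\lambda=\tfrac{1}{2}$, $j=0$, where both $a_{0,0}$ and $b_{0,0}$ carry the factor $\lambda-\tfrac{1}{2}$ and so vanish simultaneously, which is exactly why a second initial condition is needed in that case and only in that case. One should also note that the radicand at $i=0$, namely $\frac{2\lambda}{(\lambda+1)\lambda}=\frac{2}{\lambda+1}$, is to be simplified before evaluation, so that the displayed formula for $a_{0,0}$ is meaningful for every $\lambda>-\tfrac{1}{2}$ (including $\lambda=0$).
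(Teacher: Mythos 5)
Your proof is correct and takes essentially the same route as the paper's: everything reduces to showing that the leading coefficient $a_{i,j}$ of \eqref{iirecur} is nonzero for $i\ge j$ (with the $j=0$ degeneracies handled by inspecting $a_{0,0}$, $b_{0,0}$, $c_{0,0}$), after which the prescribed initial data propagate uniquely and must agree with $I^{(\lambda)}_{i,j}$. The explicit factorization $a_{i,j}=2\bigl(i+\lambda-\tfrac{1}{2}\bigr)(i+1-j)(i+1+j+2\lambda)\sqrt{\tfrac{i+2\lambda}{(i+1)(i+\lambda+1)(i+\lambda)}}$ is a useful addition, since the paper merely asserts the nonvanishing.
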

\begin{proof}
For $j>0,\ a_{i,j}\ne 0$ for $i\ge j$ so the result follows from equation~\eqref{iirecur}. For $j=0$ and $\lambda\ne1/2,\ c_{0,0}=0\ne a_{0,0}$ so that only $f_{0,0}$ is needed to compute $f_{1,0}$. The remaining $f_{i,j}$ are computed in the standard fashion from equation~\eqref{iirecur}. For the last case when $\lambda=1/2,\ a_{0,0}=0=b_{i,0}$ so $f_{2,0}=\frac{c_{1,0}}{a_{1,0}} f_{0,0}$ and $f_{3,0}=\frac{c_{2,0}}{a_{2,0}} f_{1,0}$. The remaining $f_{i,0}$ are computed in the same way using the fact that $a_{i,0}\ne0$ for $i>0$.
\end{proof}
Similarly, for the recurrence in $j$ we have the following.
\begin{proposition}\label{jjjrecur}
Given $a_{i,j}$, $b_{i,j}$, $c_{i,j}$ and $\lambda>-1/2$, for each $i>0$ the unique solution of equation~\eqref{jjrecur} with initial conditions
$f_{j,j+1}=0$ and $f_{j,j}=I_{j,j}^{(\lambda)}$ is $f_{i,j}=I^{(\lambda)}_{i,j}$. 
\end{proposition}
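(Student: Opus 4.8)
The argument mirrors that of Proposition~\ref{iiirecur} with the roles of $i$ and $j$ interchanged, and is in fact cleaner since no degenerate cases arise. I would proceed in three steps: check that $I^{(\lambda)}_{i,j}$ is a solution of \eqref{jjrecur} with the prescribed initial data; identify the coefficient of $f^{(\lambda)}_{i,j-1}$ in \eqref{jjrecur} and show it never vanishes on the relevant index range; and then run the recurrence downward in $j$ to get uniqueness.

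First, that $f^{(\lambda)}_{i,j}=I^{(\lambda)}_{i,j}$ obeys \eqref{jjrecur} is precisely what was established in the derivation of Theorem~\ref{ifixedTH} (via the hypergeometric representation \eqref{Ione}, the transformation \eqref{althyp}, and the contiguous relation used there). For the initial data, $I^{(\lambda)}_{i,i}$ is by definition the quantity displayed in the statement, while $I^{(\lambda)}_{i,i+1}=0$ because the polynomials $\hat p_n^{(\lambda)}(2t-1)$ are orthogonal on $[0,1]$ with respect to $(t(1-t))^{\lambda-1/2}$, so $\hat p_{i+1}^{(\lambda)}(2t-1)$ is orthogonal to every polynomial of degree $\le i$, in particular to $\hat p_i^{(\lambda)}$. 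Hence existence is immediate and only uniqueness requires work.

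Second, I would record the coefficient $\hat c_{i,j}$ of $f^{(\lambda)}_{i,j-1}$ in \eqref{jjrecur}: it is proportional to the quantity $c_j=-2(i+j+2\lambda-1)(i-j+1)(2j+2\lambda+1)(j+\lambda+1)$ appearing in the discussion preceding Theorem~\ref{ifixedTH}. For a fixed $i>0$ the indices that matter are $0\le j\le i$ (recall $f^{(\lambda)}_{i,j}=0$ for $j>i$, which is consistent with the prescribed value $f_{i,i+1}=0$), and for $1\le j\le i$ and $\lambda>-1/2$ each factor of $c_j$ is strictly positive: $i+j+2\lambda-1\ge i+2\lambda>0$ since $i\ge1$, $i-j+1\ge1$ since $j\le i$, $2j+2\lambda+1\ge3+2\lambda>0$, and $j+\lambda+1\ge2+\lambda>0$. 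Thus $\hat c_{i,j}\ne0$ for all $1\le j\le i$.

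Third, I would run \eqref{jjrecur} downward. At $j=i$ it reduces to $\hat b_{i,i}f_{i,i}+\hat c_{i,i}f_{i,i-1}=0$ (using $f_{i,i+1}=0$), which determines $f_{i,i-1}$ since $\hat c_{i,i}\ne0$; then successively for $j=i-1,i-2,\dots,1$, equation \eqref{jjrecur} expresses $f_{i,j-1}$ uniquely in terms of the already-computed $f_{i,j}$ and $f_{i,j+1}$, again because $\hat c_{i,j}\ne0$. Hence the single value $f_{i,i}=I^{(\lambda)}_{i,i}$, together with $f_{i,i+1}=0$, determines $f_{i,j}$ for every $0\le j\le i$; since $I^{(\lambda)}_{i,j}$ is one such solution, it is the only one. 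The main potential obstacle is the non-vanishing of the backward-shift coefficient $\hat c_{i,j}$ over the whole range $1\le j\le i$, but once the explicit factorization of $c_j$ is available this is elementary, and unlike in Proposition~\ref{iiirecur} no separate treatment of $\lambda=1/2$ (nor a second initial value) is needed because $c_j$ never degenerates here.
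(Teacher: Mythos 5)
Your proof is correct and follows essentially the same route as the paper, which disposes of the proposition in one sentence by noting that the coefficients of \eqref{jjrecur} do not vanish for $i\ge j$ and running the recurrence. Your version is in fact slightly more careful: you isolate the only coefficient that actually matters for the downward recursion, namely $\hat c_{i,j}\propto c_j$, and verify its factorization is nonzero for $1\le j\le i$, whereas the paper's blanket claim that $\hat a_{i,j}$ is also nonzero is unnecessary (and fails at $j=i-1$, where the factor $i-j-1$ in $d_j$ vanishes).
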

Since $\hat c_{i,j},\ \hat b_{i,j}$, and $\hat a_{i,j} $ are not equal to zero for $i\ge j$ the result follows from equation~\eqref{jjrecur}.

\section{Connections to other problems}

Recall that it is said that a function $\Psi(x,y)$ is a solution of a bispectral problem if it satisfies the following
\[
\begin{split}
A\Psi(x,y)&=g(y)\Psi(x,y)\\
B\Psi(x,y)&=f(x)\Psi(x,y),
\end{split}
\]
where $A$, $B$ are some operators, with $A$ acting only on $x$ and $B$ acting only on $y$, and $f$, $g$ are some functions \cite{DG}. It is shown in \cite{LM98} that if $A$ and $B$ are tridiagonal operators then the solutions of the corresponding discrete bispectral problem are related to the Askey-Wilson polynomials. 

The problem we are dealing with in this paper is the following generalization of a bispectral problem:
\begin{equation}\label{gBP}
\begin{split}
A\Psi(i,j)&=g(j)B\Psi(i,j)\\
C\Psi(i,j)&=f(i)D\Psi(i,j),
\end{split}
\end{equation}
where $i$, $j$ are discrete variables, the operators $A$ and $B$ are tridiagonal operators acting on the index $i$, and  $C$, $D$ are tridiagonal operators acting on the index $j$. Note that each equation in \eqref{gBP} is a generalized eigenvalue problem and, hence, the problem \eqref{gBP} includes a bispectral problem as a particular case (for instance when $B$ and $D$ are the identity operators). 

Setting $\Psi(i,j)=f^{(\lambda)}_{i,j}$ we see that Theorems \ref{UGEPth} and \ref{ifixedTH} tell us that $f^{(\lambda)}_{i,j}$ is a solution of a generalized bispectral problem of the form \eqref{gBP}. Actually, it would be nice to find a characterization of such generalized bispectral problems similar to what was done in \cite{LM98} for discrete bispectral problems. It would also be interesting to study the consistency relations for the system \eqref{jjjrecur}, \eqref{iiirecur} and those relations will constitute a nonlinear system of difference equations on the coefficients of  \eqref{jjjrecur}, \eqref{iiirecur}.

Another link that is worth discussing here is the relation to linear spectral transformations. To see this in its simplest form, let us consider two families $\hat{p}_j^{(1/2)}(t)$ and $\hat{p}_j^{(3/2)}(t)$ of the ultraspherical polynomials. That is, we consider the two measures on $[-1,1]$
\[
d\mu_{1/2}(t)=dt, \quad d\mu_{3/2}(t)=(1-t^2)dt,
\]
which are clearly related in the following manner
\begin{equation}\label{GeronimusNM}
d\mu_{1/2}(t)=\frac{d\mu_{3/2}(t)}{(1-t^2)}.
\end{equation}
In such a case, one usually says that $d\mu_{1/2}(t)$ is a Geronimus transformation of $d\mu_{3/2}(t)$ of the second order or $d\mu_{1/2}(t)$ is the inverse quadratic spectral transform of $d\mu_{3/2}(t)$ (for instance, see \cite{MandCo11}). As a matter of fact, the Geronimus transformation of $d\mu_{3/2}(t)$ is more general than just \eqref{GeronimusNM} and it has the form
\[
d\mu^{(G)}(t)=\frac{d\mu_{3/2}(t)}{(1-t^2)}+M\delta_{-1}+N\delta_{1},
\]
where $\delta_{a}$ denotes the Dirac delta function supported at $a$ and $M$, $N$ are some nonegative real numbers. For the corresponding orthogonal polynomials we have that
\[
\hat{p}_i^{(G)}(t)=\alpha(1,i)\hat{p}_i^{(3/2)}(t)+\alpha(2,i)\hat{p}_{i-1}^{(3/2)}(t)+\alpha(3,i)\hat{p}_{i-2}^{(3/2)}(t),
\]
where $\alpha(1,i)$, $\alpha(2,i)$, and $\alpha(3,i)$ are some coefficients and they are of the form \eqref{Cijg}. For instance, in the simplest case \eqref{GeronimusNM}, introducing the coefficients 
\[
f_{i,j}^{(1/2,3/2)}=\int_{-1}^1\hat p_i^{(1/2)}(t)\hat p_j^{(3/2)}(t)(1-t^2)dt
\]
leads to the relation
\begin{equation}\label{Geronimus}
\hat{p}_i^{(1/2)}(t)=f_{i,i}^{(1/2,3/2)}\hat{p}_i^{(3/2)}(t)+f_{i,i-1}^{(1/2,3/2)}\hat{p}_{i-1}^{(3/2)}(t)+f_{i,i-2}^{(1/2,3/2)}\hat{p}_{i-2}^{(3/2)}(t),
\end{equation}
where
\[
f_{i,i-2}^{(1/2,3/2)}=-\sqrt{\frac{k_{i-2,i-2,3/2}}{k_{i,i,1/2}}},\quad 
f_{i,i-1}^{(1/2,3/2)}=0 \quad
f_{i,i}^{(1/2,3/2)}=\sqrt{\frac{k_{i,i,1/2}}{k_{i,i,3/2}}}
\]
and $k_{i,j,\lambda}$ is defined by formula \eqref{norm}. In fact, this can be generalized to the case of arbitrary Geronimus transformation but the formulas will get messier.

Due to Theorem \ref{GdWaveTH}, the coefficients $f_{i,j}^{(1/2,3/2)}$ satisfy the discrete wave equation in question. Besides, formula \eqref{Geronimus} shows that in the sequence $f_{i,j}^{(1/2,3/2)}$ when $j$ is fixed there are at most three nonzero coefficients and we know how to find them explicitly. Moreover, returning to the moving wave interpretations we did before we see that in this case we have a localized wave and below is the simulation.
\begin{figure}[h!]
\includegraphics[width=\linewidth]{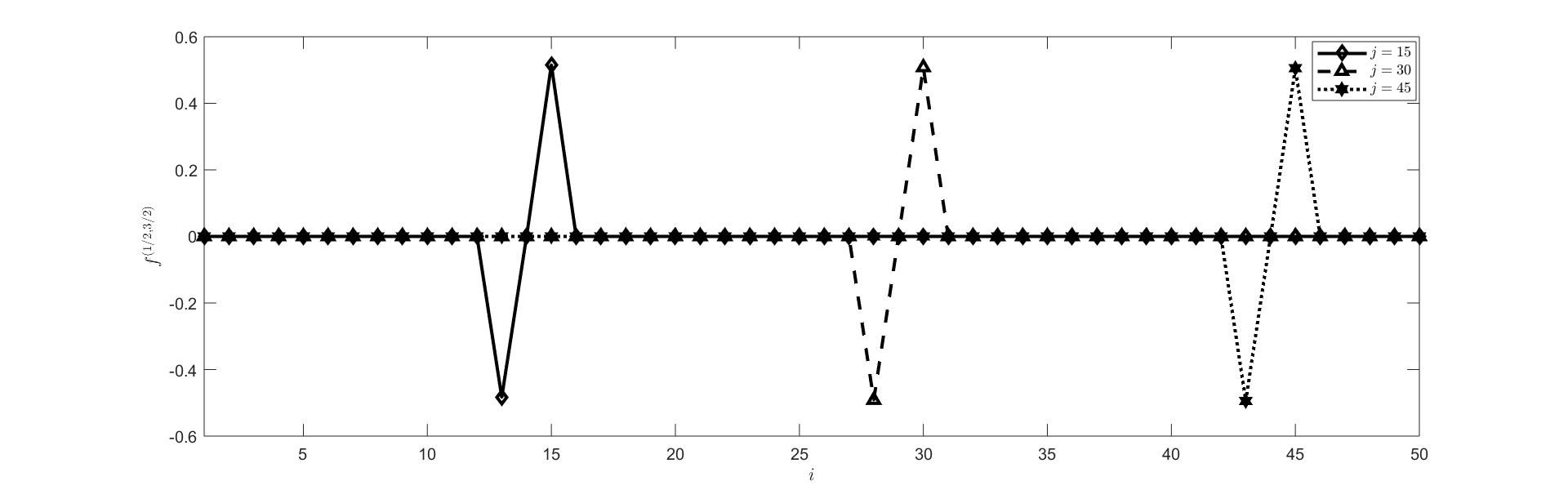}
\caption{This picture shows three graphs of the function $f=f^{(1/2,3/2)}(i)=f^{(1/2,3/2)}_{i,j}$ of the discrete space variable $i$ at the three different discrete times $j=15$, $j=30$, and $j=45$.}
\label{Fig3}
\end{figure}

The phenomenon of localized waves is related to the fact that the measures are related to one another through spectral transformations. Still, one can define even more general coefficients 
\[
f_{i,j}^{(\lambda,\mu)}=\int_{-1}^1\hat p_i^{(\lambda)}(t)\hat p_j^{(\mu)}(t)(1-t^2)^{\mu-1/2}dt
\]
and, as before, they form a solution to a wave equation. Moreover, these coefficients are known explicitly \cite[Section 7.1]{AAR} and are called the connection coefficients. It will be shown in a forthcoming paper that the family $f_{i,j}^{(\lambda,\mu)}$ is also a solution of a bispectral problem of the form \eqref{gBP}. In addition, the coefficients
\[
f_{i,j,k}^{(1,1,1)}=\frac{2}{\pi}\int_{-1}^1 p_i^{(1)}(t) p_j^{(1)}(t)p_k^{(1)}(t)(1-t^2)^{1/2}dt,
\]
where the polynomials $p_i^{(1)}(t)$ are the monic Chebyshev polynomials of second kind, count Dyck paths \cite{deSMV85}. Thus, it would be interesting to find out if the coefficients $f_{i,j,k}^{(1,1,1)}$ still satisfy some generalized eigenvalue problems and in which case if such generalized eigenvalue problems admit a combinatorial interpretation.

\medskip

\noindent{\bf Acknowledgments.} M.D. was supported in part by the NSF DMS grant 2008844. The authors are grateful to Erik Koelink for interesting and helpful remarks. They are also indebted to the anonymous referees for suggestions that helped to improve the presentation of the results. J.G. would like to thank J.G. for the support.

\end{document}